\Crefname{equation}{}{}
\Crefname{appendix}{Appx.}{Appxs.}
\Crefname{section}{Sect.}{Sects.}
\Crefname{figure}{Fig.}{Figs.}
\newcommand{\asserts}{{\tt pr}}
\newcommand{\peq}{{\tt eq}}
\newcommand{\folfp}{\mathsf{FO[LFP]}}
\newcommand{\toolname}[0]{{\sc AutoProof}}
\newlist{compitem}{itemize}{4}
\setlist[compitem,1]{nolistsep,label=$\bullet$}
\newenvironment{talign*}
 {\csname align*\endcsname}
 {\endalign}
\begin{document}

\title{Proving Functional Program Equivalence via \\Directed Lemma Synthesis}

\author{Yican Sun\inst{1} \and
Ruyi Ji \inst{1}  \and
Jian Fang \inst{1}\and
Xuanlin Jiang \inst{1}\and\\
Mingshuai Chen \inst{3}\and
Yingfei Xiong \inst{1}\inst{2}$^{\text{(\Letter)}}$
}

\authorrunning{Y.~Sun et al.}
\institute{Key Laboratory of High Confidence Software Technologies (Peking University), Ministry of Education; School of
Computer Science, Peking University, Beijing, China\\ \email{\{sycpku,jiruyi910387714,xiongyf\}@pku.edu.cn}\\\email{\{fangjian,xljiang\}@stu.pku.edu.cn}\and
Zhongguancun Laboratory \and
Zhejiang University, Hangzhou, China\\ \email{m.chen@zju.edu.cn}
}

\maketitle

\begin{abstract}
\vspace{-2em}
Proving equivalence between functional programs is a fundamental problem in program verification, which often amounts to reasoning about \emph{algebraic data types} (ADTs) and compositions of \emph{structural recursions}. Modern theorem provers provide  \emph{structural induction} for such reasoning, but a structural induction on the original theorem is often insufficient for many equivalence theorems.
In such cases, one has to invent a set of lemmas, prove these lemmas by additional induction, and use these lemmas to prove the original theorem. There is, however, a lack of systematic understanding of what lemmas are needed for inductive proofs and how these lemmas can be synthesized automatically. This paper presents \emph{directed lemma synthesis}, an effective approach to automating equivalence proofs by discovering critical lemmas using program synthesis techniques. We first identify two \emph{induction-friendly} forms of propositions that give formal guarantees to the progress of the proof. We then propose two tactics that synthesize and apply lemmas, thereby transforming the proof goal into induction-friendly forms. Both tactics reduce lemma synthesis to a set of independent and typically small program synthesis problems that can be efficiently solved. Experimental results demonstrate the effectiveness of our approach: Compared to state-of-the-art equivalence checkers employing heuristic-based lemma enumeration, directed lemma synthesis saves 95.47\% runtime on average and solves 38 more tasks over an extended version of the standard benchmark set.

\keywords{Program equivalence checking \and Functional programs \and Lemma synthesis}

\end{abstract}

\lstset{
basicstyle=\fontsize{7.5pt}{0.8\baselineskip}\ttfamily,
keywordstyle=\rm\bfseries,
upquote=true,
commentstyle=,
escapeinside=``,
morekeywords={Let, rec, match, with, end, Fixpoint, Inductive, if, then, else, in},
belowskip=0.1em
}

\section{Introduction}
\label{sec:intro}

Automatically proving the equivalence between functional programs is a fundamental problem in program verification. On the one hand, it is the basic way to certify the correctness of optimizing functional programs. On the other hand, since modern theorem provers such as Isabelle~\cite{IsabelleTool}, Coq~\cite{CoqTool}, and Lean~\cite{LeanTool} are based on functional programming languages, many other verification problems reduce to reasoning about equivalence between functional programs.

The core of functional programming languages is built upon \emph{algebraic data types} (ADTs).
An ADT describes composite data structures by combining simpler types; it can be recursive when referring to itself in its own definition.
ADTs are often processed by \emph{structural recursions}, where recursive calls are invoked over the recursive substructures of the input value.
As a result, the crux of verifying functional program equivalence is to reason about the \emph{equivalence between composed structural recursions}, as demonstrated by the following example.

\begin{figure}
\vspace{-2em}
\begin{minipage}[t]{0.5\linewidth}
\begin{small}
\begin{lstlisting}
Inductive List = nil | cons Int List;

Let rev (l:List) =
match l with
| nil `$\rightarrow$` nil
| cons h t `$\rightarrow$` snoc h (rev t)
end;

Let sort (l:List) =
match l with
| nil `$\rightarrow$` nil
| cons h t `$\rightarrow$` ins h (sort t)
end;

Let sum (l:List) =
match y with
| nil `$\rightarrow$` 0
| cons h t `$\rightarrow$` h + (sum t)
end;
\end{lstlisting}
\end{small}
\end{minipage}
\begin{minipage}[t]{0.48\linewidth}
\begin{lstlisting}


Let snoc (x:Int) (l:List) =
match l with
| nil `$\rightarrow$` cons x nil
| cons h t `$\rightarrow$` cons h (snoc x t)
end;

Let ins (x:Int) (l:List) =
match l with
| nil `$\rightarrow$` cons x nil
| cons h t `$\rightarrow$`
    if x `$\le$` h then cons x l
    else cons h (ins x t)
end;
\end{lstlisting}
\end{minipage}
\caption{An algebraic data type and structurally recursive functions.}
\label{fig:spec}
\vspace{-2em}
\end{figure}

\begin{example}\label{example:list-sum}
\Cref{fig:spec} depicts a common ADT \texttt{List} with two constructors, \texttt{nil} and \texttt{cons}, and standard structurally recursive functions, \texttt{rev} that reverses a list, \texttt{sort} that applies insertion sort, and \texttt{sum} that calculates the sum of a list. Functions \texttt{snoc} and \texttt{ins} are for implementing these functions. We are interested in proving that summing a list after reverse is the equivalent of summing a list after sorting:
\begin{small}\begin{align}
    \forall\ {\tt xs: List}.\quad \texttt{sum (rev xs)} \eeq \texttt{sum (sort xs)}~. \tag{$\dag$} \label{eq:prop}
\end{align}\end{small}
To prove the equivalence, it is natural to apply \emph{structural induction}, which has been integrated into modern theorem provers. 
A structural induction certifies that proposition $P(x)$ holds for every instance $x$ of some ADT by showing that $P(x)$ holds for each possible constructor of $x$, assuming the \emph{induction hypothesis} that $P(x')$ holds for the substructure $x'$ of $x$.
For example, a structural induction for \cref{eq:prop} requires to prove two subgoals, each corresponds to a constructor of {\tt List}. The first subgoal is to show \cref{eq:prop} holds when {\tt xs = nil}. The second subgoal induces the following inductive hypothesis.
\begin{small}\begin{align}
    \texttt{sum (rev t)} \eeq \texttt{sum (sort t)}~. \tag{IH} \label{eq:IH2}
\end{align}\end{small}
Proposition \cref{eq:prop} holds for the $\tt cons$ case if: \cref{eq:prop} is true, assuming $\tt xs = cons~h~t$ and \eqref{eq:IH2}.
\qedT
\end{example}

\paragraph{\bf Challenge: Lemma Finding}
Nonetheless, \emph{many theorems cannot be proved by only induction over the original theorem}~\cite{Rippling}. \cref{example:list-sum} is such a case: Its proof requires induction, but induction over~\eqref{eq:prop} is insufficient since we cannot apply the inductive hypothesis~\eqref{eq:IH2}; see~\Cref{app:intro} for a formal proof.
To apply \eqref{eq:IH2}, we have to transform \eqref{eq:prop} until there is a subterm matching either the left-hand-side (LHS) or right-hand-side (RHS) of \eqref{eq:IH2}, such that we can apply \eqref{eq:IH2} to rewrite the transformed formula. However, such a subterm can never be derived through a deductive transformation; see detailed discussion in \cref{sec:motivating}. 

In such cases, it is necessary to invent a set of lemmas, prove these lemmas by additional induction, and use these lemmas to prove the original proposition. Accordingly, the proof process boils down to
\begin{enumerate*}[label=(\roman*)]
    \item \emph{lemma finding}, and
    \item \emph{deductive reasoning with the aid of lemmas}.
\end{enumerate*}
Whereas decision procedures for deductive reasoning have been extensively studied~\cite{CalculusOfComputation, NelsonOppen, Z3}, \emph{there is still a lack of systematic understanding of what lemmas are needed for inductive proofs and how these lemmas can be synthesized automatically}.

Due to the lack of theoretical understanding, many existing automatic proof approaches resort to \emph{heuristic-based lemma enumeration}~\cite{HipSpec,ZipperpositionInd,SupIndMath,kunkakautograde,ImandraSystem,vampireind,cvc4ind,thesy,zeno}. These approaches typically work as follows:
\begin{enumerate*}[label=(\roman*)]
    \item use heuristics to rank all possible lemma candidates in a syntactic space (the heuristics are commonly based on certain machine-learning models or the textual similarity to the original proposition),
    \item enumerate the candidates by rank and
    \item try to prove each lemma candidate and certify the original proposition using the lemma.
\end{enumerate*}
Since there is no guarantee that the lemma candidates are helpful in advancing the proof, such solvers may waste time trying useless candidates, thus leading to inefficiency.
For \cref{example:list-sum}, the enumeration-based solver {\sc HipSpec}~\cite{HipSpec} produces lemma $\tt \forall xs.\ rev~(rev~xs) = xs$, which provides little help to the proof.

\paragraph{\bf Approach}
We present \emph{directed lemma synthesis} to avoid enumerating useless lemmas. From \cref{example:list-sum}, we can see that the key to the inductive proof lies in the \emph{effective application} of the inductive hypothesis. Based on this observation, we identify two syntactic forms 
of propositions that guarantee the effective application of the inductive hypothesis, termed \emph{induction-friendly forms}.
Next, we propose two tactics that synthesize and apply lemmas. The lemmas synthesized by our tactics take the form of an equation, with one of its sides matching a term in the original proposition, and can be used to transform the original proposition by rewriting the matched term into the other side of the lemma. Consequently, the current proof goal splits into two subgoals -- one for proving the transformed proposition and the other for proving the synthesized lemma itself.
Our tactics have the following properties:
\begin{itemize}[label=$\bullet$,leftmargin=*]
\item \emph{Progress}:
The new proof goals after applying our tactics eventually fall into one of the induction-friendly forms. That is, compared with existing directionless lemma enumeration, our synthesis procedure is \emph{directed}: it eventually produces subgoals that admit effective applications of the inductive hypothesis.
\item \emph{Efficiency}: 
The lemma synthesis problem in our tactics can be reduced to a set of independent and typically small \emph{program synthesis} problems, thereby allowing an off-the-shelf program synthesizer to efficiently solve the problems.
\end{itemize}

Based on the two tactics, we propose \toolname{}, an automated approach to proving the equivalence between functional programs by \emph{combining any existing decision procedure with our two tactics for directed lemma synthesis}.

For \cref{example:list-sum}, \toolname{} synthesizes the lemma
\begin{small}\begin{align}
\forall\ \tt xs: List.\quad \texttt{sum (rev xs)} \eeq \texttt{sum xs}~, \tag{L1}\label{eq:L1}
\end{align}\end{small}
where the LHS matches the LHS of the original proposition \eqref{eq:prop}. Therefore, we can use \cref{eq:L1} to rewrite \eqref{eq:prop} into
\begin{small}\begin{align}
\forall\ \tt xs: List.\quad \texttt{sum xs} \eeq \texttt{sum (sort xs)}~. \tag{L2}\label{eq:rem}
\end{align}\end{small}
As will be shown later, both \eqref{eq:L1} and \eqref{eq:rem} fall into the first induction-friendly form, thus ensuring the application of the inductive hypothesis.

\paragraph{Evaluation}
We have implemented \toolname{} on top of {\sc Cvc4Ind}~\cite{cvc4ind} -- the available state-of-the-art equivalence checker with heuristic-based lemma enumeration. We conduct experiments on the program equivalence subset of an extended version of the standard benchmark in automated inductive reasoning. The results show that, compared with the original {\sc Cvc4Ind}, our directed lemma synthesis saves 95.47\% runtime on average and help solve 38 more tasks.

\smallskip
\paragraph{\bf Contributions} The main contributions of this paper include the follows.
\begin{itemize}[leftmargin=*,label=$\bullet$]
    \item The idea of \emph{directed lemma synthesis}, i.e., synthesizing lemmas to transform the proof goal into desired forms.
    \item Two \emph{induction-friendly forms} that guarantee the effective application of the inductive hypothesis, as well as two \emph{tactics} that synthesize and apply lemmas to transform the proof goal into these forms. 
    The lemma synthesis in our tactics can be reduced to a set of independent and typically small synthesis problems, ensuring the efficiency of the lemma synthesis.
    \item The implementation and evaluation of our approach, demonstrating the effectiveness of our approach in synthesizing lemmas to improve the state-of-the-art decision procedures. 
\end{itemize}

\section{Motivation and Approach Overview}
\label{sec:motivating}
In this section, we illustrate \toolname{} over examples. For simplicity, we consider only structurally recursive functions with one parameter in this section.

\paragraph{\bf A Warm-up Example}
To begin with, let us first consider an equation where the direct structural induction yields an effective application of the inductive hypothesis.
\begin{small}\begin{align}
    \forall {\tt xs: List}.\quad \texttt{sum (rev xs)} \eeq \texttt{sum xs} \tag{$\dag_W$} \label{eq:ep}
\end{align}\end{small}
To prove this equation, we conduct a structural induction on $\tt xs$, the ADT argument that the structural recursion traverses, resulting in two cases $\tt xs = nil$ and $\tt xs = cons~h~t$. The first case is trivial, 
and in the second case, we have an inductive hypothesis over the tail list $\tt t$.
\begin{small}\begin{align}
    \texttt{sum (rev t)} \eeq \texttt{sum t} \tag{IH$_W$}\label{eq:IH}
\end{align}\end{small}
We first use the equation $\tt xs = cons~h~t$ to rewrite the original proposition \eqref{eq:ep}, and obtain the following equation.
\begin{small}\begin{align*}
    \texttt{sum (rev (cons h t))} \eeq \texttt{sum (cons h t)}
\end{align*}\end{small}
Here $\tt sum$ and $\tt rev$ are both structural recursions, which use pattern matching to choose different branches based on the constructor of $\tt xs$. With $\tt xs$ replaced as $\tt cons~h~t$,  we can now proceed with the pattern matching and obtain the following equation.
\begin{small}\begin{align}
    \texttt{sum (snoc h (rev t))} \eeq \texttt{h + (sum t)} \label{eq:inter}
\end{align}\end{small}
Now the equation contains a subterm $\tt sum~t$ that matches the RHS of the inductive hypothesis~\eqref{eq:IH}, which allows us to rewrite this equation with \eqref{eq:IH}, resulting in the following equation.
\begin{small}\begin{align}
    \texttt{sum (snoc h (rev t))} \eeq \texttt{h + (sum (rev t))} \label{eq:pre-gen}
\end{align}\end{small}
There is a common ``{\tt rev t}'' term on both sides of the equation above, and we can apply the standard generalization technique to replace it with a new fresh variable {\tt r}, obtaining the following equation.
\begin{small}\begin{align}
    \texttt{sum (snoc h r)} \eeq \texttt{h + (sum r)} \label{eq:post-gen}
\end{align}\end{small}
This equation is simpler than the original one as $\tt snoc$ does not involve calls to other structurally recursive functions. By further applying induction on $\tt r$, we can prove this equation.

We can see that the above proof contains two key steps:
\begin{enumerate*}[label=(\roman*)]
    \item using the inductive hypothesis to rewrite the equation, and
    \item using generalization to eliminate a common non-leaf subprogram.
\end{enumerate*}
We call such two steps an \emph{effective application} of the inductive hypothesis. Note that an effective application is guaranteed because the RHS of the original equation is a single structural recursion call, $\tt sum~xs$. Since a structural recursion applies itself to the substructure of the input, $\tt sum~t$ is guaranteed to appear after reduction. Then, we can use the inductive hypothesis to rewrite, and the rewritten RHS contains $\tt rev~t$. Similarly, the inner-most function call, $\tt rev~xs$, is guaranteed to reduce to $\tt rev~t$. Therefore, a generalization is guaranteed.

\paragraph{\bf Induction-friendly forms} In general, we identify \emph{induction-friendly} forms, where for every equation in this form, there exists a variable such that performing induction on it yields an effective application of the inductive hypothesis for 
the cases involving a recursive substructure.
From the discussion above, we have the simplified version of the first induction-friendly form.

\begin{enumerate}[label=\textbf{(F0)},ref=(F0),nolistsep,leftmargin=*]
    \item\label{form-informal} \emph{(Simplified \labelcref{form1}).} One side of the equation is a single call to a structurally recursive function.
\end{enumerate}

\paragraph{\bf A Harder Example}
\label{sec:harder}
Now let us consider the example equation we have seen in the introduction.
\begin{small}\begin{align}
    \forall\ {\tt xs: List}.\quad \texttt{sum (rev xs)} \eeq \texttt{sum (sort xs)} \tag{$\dag$} \label{eq:prop}
\end{align}\end{small}
Since neither side of \eqref{eq:prop} is a single call to a structurally recursive function, this equation does not fall into \labelcref{form-informal}, 
and indeed, the induction over it will get stuck. To see this point, let us still consider the $\tt x = cons~h~t$ case, where the inductive hypothesis is as follows.
\begin{small}\begin{align}
    \texttt{sum (rev t)} \eeq \texttt{sum (sort t)} \tag{IH} \label{eq:IH2}
\end{align}\end{small}
By rewriting and reducing the original proposition with $\tt x = cons~h~t$, we get the following equation.
\begin{small}\begin{align*}
    \texttt{sum (snoc h (rev t))} \eeq \texttt{sum (ins h (sort t))}
\end{align*}\end{small}
Unfortunately, neither side of \eqref{eq:IH2} appears, disabling the application of the inductive hypothesis. 
In fact, we can formally prove that this proposition cannot be proved by only induction over the original proposition. Please see \Cref{app:intro} for details.

If we can transform the original proposition \eqref{eq:prop} into \labelcref{form-informal}, we can ensure to effectively apply the inductive hypothesis. One way to perform this transformation is to find an equation where one side of the equation is the same as one side of the original proposition, and the other side is a single call to a structurally recursive function. This leads to the lemma \eqref{eq:L1}, which we have seen in the introduction.
\begin{small}\begin{align}
\forall\ \tt xs: List.\quad \texttt{sum (rev xs)} \eeq \texttt{sum xs} \tag{L1}\label{eq:L1}
\end{align}\end{small}
Rewriting \eqref{eq:prop} with \eqref{eq:L1}, we obtain \eqref{eq:rem} we have seen.
\begin{small}\begin{align}
\forall\ \tt xs: List.\quad \texttt{sum xs} \eeq \texttt{sum (sort xs)}\tag{L2}\label{eq:rem}
\end{align}\end{small}
Now the original proof goal \eqref{eq:prop} splits into \eqref{eq:L1} and \eqref{eq:rem}, both conforming to \labelcref{form-informal}. Now we have the guarantee that the inductive hypothesis can be applied in the inductive proofs of both \eqref{eq:L1} and \eqref{eq:rem}.

\paragraph{\bf Automation}
Most steps of the above transformation process can be easily automated, and the only difficult step is to find a suitable lemma. Based on the form of the lemma, the key is finding the structurally recursive function $\tt sum$ to be used on the RHS, equivalent to a known term $\tt sum \circ rev$ on the LHS. In general, synthesizing a function from scratch may be difficult. However, synthesizing a structural recursion is significantly easier for the following two reasons. First, the template fixes a large fraction of codes in a structural recursion. In this example, the structural recursion over {\tt xs} with the following template.
\begin{lstlisting}[xleftmargin=60pt]
    Let f xs =
      match xs with
      | nil `$\rightarrow$` `$base$`
      | cons h t `$\rightarrow$` Let r = f t in `$comb$` h r
      end;
\end{lstlisting}
where the only unknown parts are $base$ and $comb$. Second, we can separate the expression for each constructor as an independent synthesis task. In this example, we have the following two independent synthesis tasks for the constructors $\tt nil$ and $\tt cons$, respectively.
\begin{small}\begin{align*}
    \textstyle\texttt{sum~(rev~nil)} & \eeq base\\
    \textstyle\forall~\texttt{h\ t}.\quad \texttt{sum~(rev~(cons~h~t))} & \eeq comb\ \ {\tt h\ (sum\ (rev\ t))}
\end{align*}\end{small}
Existing program synthesizers~(e.g., {\sc AutoLifter}\cite{autolifter}~in our implementation) can easily solve both tasks. We get $base = 0$ and $comb~h~r=h+r$. Thus, $\tt f$ coincides with ${\tt sum}$.
An additional benefit is that a typical synthesizer requires a verifier to verify the synthesis result. Here, we can omit the verifier and rely on tests to validate the result. This does not affect the soundness of our approach since the synthesized lemma is proved recursively.

\paragraph{\bf Tactic}
Summarizing the above process, we obtain the first tactic. Given a proof goal that does not conform to \labelcref{form-informal}, this tactic splits it into two proof goals, both conforming to \labelcref{form-informal}. This tactic has two variants, which rewrite the LHS and the RHS, respectively. We give only the RHS version here.
In more detail, given an equation $\forall \bar{x}. p_1(\bar{x}) = p_2(\bar{x})$ that does not satisfy \labelcref{form-informal}, our first tactic proceeds as follows.
\begin{enumerate}[leftmargin=*, label = \textit{Step \arabic*.}, ref = \textrm{Step \arabic*},nolistsep]
\item Derive a lemma template in the form of $\forall \bar{x}, p_2(\bar{x}) = f(\bar{x}),$ where $f$ is a structurally recursive function to be synthesized.
\item Generate a set of synthesis problems and solve them to obtain $f$.
\item Generate two proof goals, $\forall \bar{x}. p_1(\bar{x}) = f(\bar{x})$ and $\forall \bar{x}. f(\bar{x}) = p_2(\bar{x})$.
\end{enumerate}

\paragraph{\bf Overall Process}
Our approach \toolname{} combines any deductive solver with the two tactics to prove equivalence between functional programs. Given an equation, our approach first invokes the deductive solver to prove the equation. If the deductive solver fails to prove, we check if the equation is in an induction-friendly form, and apply induction to generate new proof goals. Otherwise, we check if any tactic can be applied, and apply the tactic to generate new proof goals. Finally, we recursively invoke our approach to the new proof goals.

\paragraph{\bf Towards the Full Approach}
The tactic we present here attempts to transform a complex term into a single structural recursion, but it may not be possible in general. Thus, the full tactic transforms only a composition of two structural recursions into a single one each time, to significantly increase the chance of synthesis success.

Through out the section we consider only structurally recursive functions taking only one parameter, but
there may be multiple ADT variables in general (e.g., proving the commutativity of natural number multiplications). 
Our second tactic deals with an issue caused by \textit{inconsistent recursions}, that is, different recursions that traverse different ADT variables.
Examples and details on this tactic can be found in \Cref{sec:tactic2}.

\section{Preliminary}
\label{sec:problem}
This section presents the background of program equivalence checking. We first articulate the range of equivalence checking tasks. Throughout this paper, we use $p(v_1,\ldots,v_k)$ to denote a functional program $p$ whose free variables range from $\{v_1,\ldots,v_k\}$.

\paragraph{\bf Types} The family of types in \toolname{} consists of two disjoint parts: (1) the algebraic data types, and  (ADTs)~\cite{SoftwareFoundations}, and
(2) the built-in types such as {\tt Int} or {\tt Bool}. For ease of presentation, we assume that there is only one built-in type {\tt Int} for integers, and only one ADT for lists with integer elements. {\tt List} has two constructors, {\tt nil: List} for the empty list, and {\tt cons: Int $\rightarrow$ List $\rightarrow$ List} that appends an integer at the head of a list. \toolname{} can be easily extended to handle all ADTs and more built-in types.

\paragraph{\bf Syntax} As illustrated in \Cref{fig:syntax}, the specification for an equivalence checking task is generated by {\sf SPEC}, where
each task consists of two parts.

First, a specification defines a sequence of \emph{canonical} structural recursions (CSRs), each generated by {\sf CSRDef}. A CSR $f$ is a function whose last argument is of an ADT.
It applies pattern matching to the last argument $v_k$, which we call the \emph{recursive argument}, and considers all top-level constructors of $v_k$. If $v_k = {\tt nil}$, i.e., an empty list, it invokes {\it base}$(v_1,\ldots,v_{k-1})$ generated by {\sf PROG}. Otherwise, $v_k = {\tt cons\ h\ t}$. It recursively invokes itself over $\tt t$ with all other arguments {unchanged}, stores the result of the recursive call in $\tt r$, and then combines the result via the program $comb(v_1\ldots v_{k-1}, {\tt h}, {\tt r})$ generated by {\sf PROG}. The non-terminal {\sf PROG} generates either a variable $var$, a numerical constant $constant$, or an application by
\begin{enumerate*}[label=(\arabic*)]
    \item a built-in operator $op$ for a built-in type (e.g., $+,-,\times$ for {\tt Int}),
    \item a constructor $ctr$ of an ADT, and
    \item a CSR $f$,
\end{enumerate*}
followed with $k$ programs, where $k$ is the number of arguments required by this application.

Having defined all CSRs, a specification gives the equation $\forall{\bar{x}}. p_1(\bar{x}) = p_2(\bar{x})$, where $p_1$ and $p_2$ are generated by {\sf PROG}.

\paragraph{\bf Semantics} We adapt standard evaluation rules~\cite{CoqTool} to the syntax~(\Cref{fig:syntax}), see \Cref{app:prelims} for details. We use \emph{term reduction} to refer to a single-step evaluation. 

\paragraph{\bf Abstraction} An \emph{abstraction} is a syntactic transformation from a program $p$ to another program $p'$ performed in steps. In each step, given a program $p$, it introduces a fresh variable and replaces a subprogram of $p$ with the fresh variable. For example, we can abstract the program $p$ of {\tt sum (snoc (h + h) (rev t))} to $p'$ of {\tt sum (snoc a b)}, which replaces {\tt (h + h)} to {\tt a}, and {\tt (rev t)} to {\tt b}.

Note that if $p'$ is an abstraction of $p$, any transformation on $p'$ yields another transformation on $p$ by simply replacing each introduced fresh variable back with the corresponding subprogram. For example, the transformation from $p'$ to {\tt a + (sum b)} yields the transformation from $p$ to {\tt (h + h) + sum (rev t)}.

\begin{small}
\begin{figure}[t]
\begin{center}
\begin{tabular}{rcl}
{\sf SPEC} &::= & {\sf CSRDef}$^*$ $\forall {\bar x}. p_1 = p_2$\\
& where & $p_1, p_2\in {\sf PROG}$\\
{\sf CSRDef}   &::= &
{\tt\bf Let} $f$ $v_1$ $v_2$ $\ldots$ $v_k$ = \\
&&
{\tt\bf match} $v_k$ {\tt\bf with}  \\
&&
 {\tt | nil} $\rightarrow$\ \  $base$    \\
&&
 {\tt | cons h t} $\rightarrow$  {\tt{\bf Let} r }$=\ f$\ \  $v_1$ $\ldots$ $v_{k- 1}$ \ {\tt t} {\tt\bf in} $comb$\\
& & {\bf end};\\
            &   where &  $base, comb\in {\sf PROG}$\\ 
{\sf PROG} &::= &{\it f}\ \ {\sf PROG$^*$}
                 $\mid$ {\it ctr}\ \ {\sf PROG$^*$}
                 $\mid$ {\it op}\ \ {\sf PROG$^*$}
                 $\mid$ {\it var}
                 $\mid$ {\it const}
                 \\
            &    where& $f$ is a CSR, $ctr$ is a constructor of ADT, \\
            & &  $const$ is a constant with the built-in type,\\
            &    & $op$ is a primitive operator, and $var$ is a free variable.\\
\end{tabular}
\end{center}
\vspace{-1em}
\caption{Syntax of the surface language of \toolname.}
\label{fig:syntax}
\end{figure}
\end{small}

\paragraph{\bf Expressivity}
Compared with widely-considered structural recursions~\cite{CoqTool}, CSR introduces two additional restrictions. First, it applies pattern-matching to only one argument. Second, it keeps other parameters unchanged in recursive calls. However, we can always transform any structural recursion into a composition of CSRs by a refinement of \emph{defunctionalization}~\cite{defunctionalization}. Thus, restricting SRs to CSRs does not affect the expressivity of functional programs, see \Cref{app:prelims} for details.

\section{\toolname{} in Detail}
\label{sec:tactics}

\subsection{The Overall Approach}
\label{sec:integrate}

\begin{figure}[t]
\begin{minipage}[t]{0.5\linewidth}
\begin{footnotesize}
\begin{lstlisting}[language=python,escapeinside=``,tabsize=2,stepnumber=1,numbers=left,keywordstyle=\bfseries\color{purple},basicstyle=\fontsize{7.5pt}{0.8\baselineskip}\ttfamily,xleftmargin=20pt]
class lemma_tactic:
  # to be instantiated
  def precond(eq): pass
  def extract(eq): pass

  def t_apply(eq):
    `$p_s'$`,`$v$` = extract(eq)
    lem = syn_lem(`$p_s'$`,`$v$`)
    eq`$'$` = apply_lem(eq, lem)
    return eq`$'$`, lem
\end{lstlisting}
\end{footnotesize}
\end{minipage}
\begin{minipage}[t]{0.48\linewidth}
\begin{lstlisting}[language=python,escapeinside=``,tabsize=2,stepnumber=1,numbers=left,keywordstyle=\bfseries\color{purple},firstnumber=11,,basicstyle=\fontsize{7.5pt}{0.8\baselineskip}\ttfamily,morekeywords={then}]
# tactics: set of built-in
           tactics
def Prove(pr,eq):
  if try_deductive(pr,eq) succeeds:
    return
  else:
    if induction-friendly(eq) then:
      subgoals = split(pr,eq)
      for sg in subgoals: Prove(sg)
      return
    for t in tactics:
      if t.precond(eq) then:
        eq`$'$`,lem = t.t_apply(eq)
        Prove(pr,lem)
        Prove(pr.append(lem),eq`$'$`)
        return
\end{lstlisting}
\end{minipage}
\caption{Pseudocode of \toolname{}}
\label{fig:code}
\end{figure}

The pseudo-code of \toolname{} is shown in \Cref{fig:code}.
The main procedure is {\tt Prove} (Lines 11--24). The input of this procedure is a pair $(\asserts, \peq)$, termed as a \textit{goal}, where $\asserts$ is short for {premises}, which is a set of equations including all lemmas and inductive hypotheses, and $\peq$ is an equation denoting the current proposition to be proved. The target of a goal is to prove $\asserts\vdash \peq$.

{\tt Prove} wraps an underlying deductive solver responsible for performing standard deductive reasoning, such as reduction or applying a premise. {\tt Prove} first invokes the deductive solver to prove the input goal (Line 12).
If the deductive solver succeeds, the proof procedure finishes~(Lines 13--14). \toolname{} is compatible with any deductive solver. We choose the deductive reasoning module of the state-of-the-art solver {\sc Cvc4Ind}~\cite{cvc4ind} in our implementation.

Otherwise, the goal is too complex for the deductive solver to handle, which often requires finding a lemma. In this case, \toolname{} first invokes {\tt induction\\-friendly(e)} to check if the input equation $\peq$ satisfies one of the two identified forms~\labelcref{form1} and \labelcref{form2}~(defined in \Cref{sec:indfriend-forms}). If so, then by the properties of induction-friendly forms, the original goal can be split into a set of subgoals~(Line 18) by induction with effective applications of the inductive hypotheses.

If not, \toolname{} applies a built-in set 
of tactics
to transform an input equation into an induction-friendly form gradually. We will discuss tactics in detail in~\Cref{sec:routine}. A tactic generally has a precondition, i.e., {\tt precond($\cdot$)} indicating the set of applicable equations. If the tactic is applicable~(Line 21), \toolname{} invokes another procedure {\tt t\_apply} that synthesizes a lemma {\tt lem} and applies this lemma to transform the input equation $\peq$ into another equation $\peq'$. (Line 22). Then, {\tt Prove} is recursively called to prove the lemma ${\tt lem}$ and the equation $\peq'$ with the aid of {\tt lem} (Lines 24--25).

In this algorithm, induction is applied only when the proof goal is in the induction-friendly form, hence we need a \emph{progress} property that,
starting from any goal, if all lemmas are successfully synthesized, the initial goal can be eventually transformed into an induction-friendly form. This property is formally proved in \Cref{thm:progress}.

\subsection{Induction-friendly Forms in \toolname{}}
\label{sec:indfriend-forms}
\toolname{} identifies two induction-friendly forms (defined at \Cref{sec:motivating}). Both forms guarantee the effective application of the inductive hypothesis.

\begin{enumerate}[label=\textbf{(F1)}, ref=(F1),leftmargin=*,nolistsep]
    \item\label{form1} The first induction-friendly form is $f\ v_1\ldots\ v_k = p(v_1,\ldots,v_k)$, where
    \begin{enumerate}[label=\textbf{(F1.\arabic*)},ref=(F1.\arabic*),nolistsep]
    \item\label{form11} One side of the equation is in the form {\it f $v_1\ldots v_k$}, where $f$ is a CSR and $v_1\ldots v_k$ are different. From the definition of CSR, $f$ applies pattern-matching on $v_k$.
    \item\label{form12} The other side of the equation is a program $p(v_1\ldots v_k)$ satisfies the condition as follows. If $v_k$ appears in $p$, then there exists an occurrence of $v_k$, such that (1) $v_k$ appears as the recursive argument of the CSR it is passed to, and (2) all other arguments in this CSR invocation do not contain $v_k$.
    \end{enumerate}
\end{enumerate}

\begin{figure}
\vspace{-2em}
\begin{minipage}{0.53\linewidth}
\begin{lstlisting}
Let app x y =
match y with
| nil `$\rightarrow$` nil
| cons h t `$\rightarrow$`
    cons h (app x t)
end;
\end{lstlisting}
\end{minipage}
\begin{minipage}{0.41\linewidth}
\begin{lstlisting}
Let sapp x y z =
match z with
| nil `$\rightarrow$` (sum x) + (sum y)
| cons h t `$\rightarrow$` h + (sapp x y z)
end;
\end{lstlisting}
\end{minipage}
\caption{More CSRs for This Section}
\label{fig:app}
\vspace{-2em}
\end{figure}

Intuitively, \labelcref{form11} guarantees the applicability of the inductive hypothesis, and \labelcref{form12} guarantees that there is a common term for generalization. To be more concrete, consider proving {\tt $\tt \forall x,y,z.$\ sapp x y z = sum (app (app y z) x)}, where $\tt app$ and {\tt sapp} are defined in~\Cref{fig:app}, {\tt app} is the list concatenation function, and {\tt sapp} calucates the sum of three concatenated lists. Note that this equation fulfills \labelcref{form1}. Induction over {\tt z} and consider the {\tt cons} case where {\tt z = cons h t}, the LHS can be reduced to:
\small{$$\tt h + (sapp\ x\ y\ t) \eeq sum\ (app\ (cons\ h\ (app\ y\ t))\ x)$$}
Due to \labelcref{form11}, the LHS contains a single call, and due to the definition of the CSR, the recursive call must take $\tt t$ as the recursive argument and keep the other argument unchanged. Therefore, the LHS must contain
$\tt sapp\ x\ y\ t$ as a subprogram, making the induction hypothesis applicable. Applying the induction hypothesis, we get
\small{$$\tt h + (sum\ (app\ (app\ y\ t)\ x)) \eeq (app\ (cons\ h\ (app\ y\ t))\ x)$$}
Due to \labelcref{form12}, either {\tt z} do not appear in RHS, leading to exactly the same RHS as the inductive hypothesis, or we can find an occurrence of $\tt z$ in the RHS ({\tt app y z} in this example), such that {\tt z} is the recursive argument and all other arguments do not contain {\tt z}. In this case, the reduction produces the recursive call {\tt app y t}, a common subprogram on both sides. In both cases, we can generalize this subprogram to a fresh variable, yielding an effective application.

The second form is dedicated to our tactics. We propose this form to capture the lemmas proposed by our second tactic~(\Cref{sec:tactic2}).
\begin{enumerate}[label=\textbf{(F2)}, ref=(F2),leftmargin=*,nolistsep]
    \item\label{form2} The second form is
    $f\ v_1\ \ldots\ v_k \eeq f'\ v_1'\ldots\  v_k'$, where $v_i\neq v_j \wedge v'_i \neq v'_j$ for all $1\le i<j\le k$,
    i.e., each side is a single CSR call whose arguments are distinct variables.
    
\end{enumerate}
When the equation fulfills \labelcref{form2}, we can guarantee an effective application of the induction hypothesis by a nested induction over $v_k$ and $v_k'$. For example, consider proving $\tt \forall x, y, z.$ {\tt sapp x y z = sapp x z y}. We first perform induction over $\tt z$ and consider the {\tt cons} case where {\tt z = cons ${\tt h_1}$ ${\tt t_1}$}, the goal reduces to the following equation with the hypothesis $\tt sapp\ x\ y\ t_1 = sapp\ x\ t_1\ y$.
\begin{small}$$\tt h_1 + sapp\ x\ y\ t_1 = sapp\ x\  (cons\ h_1\ t_1)\ y$$\end{small}
Applying the hypothesis on LHS, we obtain the following subgoal:
\begin{small}$$\tt h_1 + sapp\ x\ t_1\ y = sapp\ x\  (cons\ h_1\ t_1)\ y$$\end{small}
Note that this subgoal falls into \labelcref{form1}, where the RHS is a single call and $\tt y$ is only used as a recursive argument, and thus an effective application of inductive hypothesis is guaranteed when we perform induction over $\tt y$. We can see that this conformance to \labelcref{form1} is guaranteed because the single call on the LHS guarantees the application of the inductive hypothesis, which will make the recursive arguments on both sides the same.

\smallskip
The following theorem establishes that both \labelcref{form1} and \labelcref{form2} are induction-friendly. The proof is deferred to \Cref{app:tactics}.
\begin{theorem}
\label{thm:friendly}
    Both \labelcref{form1} and \labelcref{form2} are induction-friendly.
\end{theorem}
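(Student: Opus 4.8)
\noindent\emph{Proof plan (sketch).}
The plan is to settle \labelcref{form1} first by a single structural induction on the recursive argument, and then to reduce \labelcref{form2} to \labelcref{form1}. Here ``induction-friendly'' is unfolded by its definition: for an equation of the stated shape we must exhibit a variable on which structural induction produces, in every branch that introduces a recursive substructure, an \emph{effective application} of the induction hypothesis (IH), namely a rewrite by the IH followed by the generalization of a non-leaf subprogram common to both sides.

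For \labelcref{form1}, by symmetry assume the single-call side is the left-hand side, so the goal is $f\ v_1\ \ldots\ v_k = p(v_1,\ldots,v_k)$ with the $v_i$ pairwise distinct variables and $p$ meeting \labelcref{form12}. I induct on $v_k$; the ${\tt nil}$ branch has no recursive substructure, so consider the ${\tt cons}$ branch $v_k = {\tt cons}\ h\ t$, where the IH is $f\ v_1\ \ldots\ v_{k-1}\ t = p(v_1,\ldots,v_{k-1},t)$. Unfolding $f$ on ${\tt cons}\ h\ t$: since a CSR recurses on the tail and leaves the other arguments untouched, and since $v_1,\ldots,v_k$ are distinct variables, the left-hand side reduces to a term-context around the subterm $f\ v_1\ \ldots\ v_{k-1}\ t$, which is \emph{syntactically} the left side of the IH; rewriting by the IH replaces it with $p(v_1,\ldots,v_{k-1},t)$ -- this is step (i). For step (ii) I split on \labelcref{form12}. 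If $v_k$ does not occur in $p$, the right-hand side is already $p(v_1,\ldots,v_{k-1},t)$, now common to both sides, and I generalize it. If $v_k$ occurs in $p$, take the occurrence guaranteed by \labelcref{form12}: it is the recursive argument of some CSR call $g\ \bar w\ v_k$ with $\bar w$ free of $v_k$. Unfolding $g$ on ${\tt cons}\ h\ t$ turns that occurrence on the right into a context around $g\ \bar w\ t$; and because $\bar w$ is untouched by $v_k\mapsto t$, the same subterm $g\ \bar w\ t$ sits inside the copy of $p(v_1,\ldots,v_{k-1},t)$ now on the left. Generalizing $g\ \bar w\ t$ -- a CSR application, hence non-leaf -- completes the effective application.

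For \labelcref{form2} the goal is $f\ v_1\ \ldots\ v_k = f'\ v_1'\ \ldots\ v_k'$ with distinct variables on each side. I induct on $v_k$ and take the ${\tt cons}$ branch $v_k = {\tt cons}\ h\ t$; unfolding the left-hand side as above exposes $f\ v_1\ \ldots\ v_{k-1}\ t$ and I rewrite by the IH -- step (i). If $v_k$ does not occur on the right, the IH's right side $f'\ v_1'\ \ldots\ v_k'$ is now common to both sides and one generalization finishes the branch. Otherwise the goal now carries a $f'$-call on one side and $f'\ v_1'\ \ldots\ v_k'[v_k\mapsto{\tt cons}\ h\ t]$ on the other; abstracting the non-variable arguments ${\tt cons}\ h\ t$ to fresh variables -- sound by the abstraction principle of \Cref{sec:problem} -- produces an equation that conforms to \labelcref{form1}: one side is $f'$ applied to distinct variables with recursive argument $v_k'$, and $v_k'$ appears on the other side only as the recursive argument of a $f'$-call whose remaining arguments are free of $v_k'$ (they originate from the rewritten IH). Appealing to the \labelcref{form1} case just proved, induction on $v_k'$ supplies step (ii); combined with the outer induction on $v_k$ this is precisely the nested induction shown in the ${\tt sapp}$ example.

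The bookkeeping is routine; the delicate, and therefore the main, obstacle is to keep it \emph{syntactic} -- tracking exactly which subterms survive each unfolding and each substitution so that the IH matches literally (not merely up to evaluation) and the subterm chosen for generalization literally occurs on both sides -- together with timing the abstraction step correctly in the \labelcref{form2}$\,\to\,$\labelcref{form1} reduction and re-verifying \labelcref{form11}--\labelcref{form12} for the transformed goal. A few degenerate configurations must also be dispatched by hand: when the combine program of a CSR discards its recursive result, the ${\tt cons}$ branch is independent of the substructure and the goal is discharged without the IH; and when $p$ (or the other side of \labelcref{form2}) is a bare variable, the generalization step is vacuous. Each such case is immediate.
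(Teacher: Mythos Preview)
Your proposal is correct and follows essentially the same route as the paper's proof. The \labelcref{form1} argument is identical to the paper's (induct on $v_k$, unfold, rewrite by the IH, then split on whether $v_k$ occurs in $p$ to locate the common subterm for generalization); for \labelcref{form2} the paper's appendix spells out the nested induction on $v_k$ and then $v_k'$ directly, while you instead observe---after the outer IH rewrite and an abstraction of ${\tt cons}\ h\ t$---that the resulting goal conforms to \labelcref{form1} and invoke that case, which is exactly the reduction the paper's main text sketches in the ${\tt sapp}$ example and is equivalent to the appendix argument.
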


\subsection{General Routine of Tactics}
\label{sec:routine}

In this part, we demonstrate the general routine of how tactics are applied to transform the input goal, i.e., the {\tt t.t\_apply($\cdot$)} function in Line~6 of \Cref{fig:code}. Let us start with the notation of \emph{abstraction}.

\paragraph{\bf Tactics} 
Informally, our tactics focus on lemmas that transform a fragment of the input equation into a single CSR invocation.
Thus, it requires a subroutine {\tt extract($\cdot$)}, which needs to be instantiated per tactic, to extract the specification of a lemma synthesis problem from the equation to be proved. The output of {\tt extract($\cdot$)} is a tuple $(p_s', v)$, where $p_s'$ is an abstraction of the subprogram to be transformed, and $v$ is a free variable in $p_s'$ (Line 7 in \Cref{fig:code}). The output $(p_s', v)$ indicates the following lemma synthesis problem.
\begin{equation}
\label{eq:lemma-formal}
        \forall \tilde{v}. \forall v.  f^*\ {\tilde v}\ v \eeq p_s'(\tilde v, v) \tag{$\peq_1$}
\end{equation}
where $\tilde{v}$ is the set of all free variables other than $v$.

The approach to finding $f^*$ has been fully presented in \Cref{sec:motivating} and thus is omitted here. As long as the program synthesis succeeds in finding $f^*$, we propose the lemma \eqref{eq:lemma-formal} above. Since $p_s'$ is an abstraction of some subprogram in the input equation, we can easily apply the lemma~\eqref{eq:lemma-formal} to transform the input equation and obtain a new equation $\peq_2$ to be proved~(Lines 8--9 in \Cref{fig:code}).

\subsection{Tactic 1: Removing Compositions}
\label{sec:tactic1}

Our first tactic is used to guarantee \labelcref{form11}. Thus, the precondition {\tt t.precond(eq)} returns true if {\tt eq} does not satisfy \labelcref{form11}. Below, we demonstrate the extract function in detail.

The extract function picks a non-leaf subprogram $c\ p_1\ p_2\ \ldots\ p_k$ of some side of the input equation {\tt eq}, where $c$ is a primitive operator, a constructor, or a CSR, $p_1\ldots p_k$ are the arguments of $c$, and at least one of $p_i$ is not a variable.  Then, we abstract all arguments passed to each $p_i$ with a fresh variable, obtaining the abstracted subprogram $p_s'$. We define the cost of this extraction as the number of fresh variables introduced. The extraction returns the extraction with the minimum cost. If there are several choices with the same minimum cost, we pick an arbitrary one.

For example, consider proving the equation $\texttt{app (rev a) (rev (rev b))} = \texttt{rev}\\ \texttt{(rev (app (rev a) b))}$, where {\tt app} is the list concatenation function presented in \Cref{fig:app}. Then, we may choose the subprogram {\tt rev (rev (app (rev a) b))} and abstract the argument {\tt app (rev a) b} of the inner {\tt rev} with a fresh variable {\tt x}, obtaining $p_s' = $ {\tt rev (rev x)}. Since this extraction only introduces one variable, the cost is one, which is the minimum cost.

Having fixed $p_s'$, we then select a variable $v$ in $p_s'$ to be the recursive argument of the synthesized CSR $f^*$. We choose the variable whose corresponding lemma fulfills the maximum number of forms in~\labelcref{form11}, \labelcref{form12}, and \labelcref{form2}. If there is a tie, we choose an arbitrary variable that reaches the maximum. Note that the lemma generated by this tactic satisfies at least~\labelcref{form11}, which guarantees the applicability of the inductive hypothesis.

\subsection{Tactic 2: Switching Recursive Arguments}
\label{sec:tactic2}
Our second tactic is used to guarantee \labelcref{form12}, and synthesizes a lemma such as {\tt f x y = f$'$ y x} to switch the recursive argument of a function (recall that the recursive argument is always the last one). This tactic is only invoked when the first tactic~(\Cref{sec:tactic1}) cannot apply. Thus, the precondition {\tt precond(eq)} returns true if {\tt eq} satisfies \labelcref{form11} but not \labelcref{form12}. Without loss of generality, we assume the LHS is a single CSR invocation with the recursive argument $x$.

The extraction algorithm picks the occurrence of $x$ with the maximum depth in the AST, where $x$ is passed to a CSR $f$. Then, each $p_i$ is either the variable $x$ or a program that does not contain $x$ (otherwise, we find an occurrence of $x$ with a larger depth). We introduce fresh variables $v_1\ldots v_k$ to abstract $p_1\ldots p_k$. For some $1\le i<k$ such that $p_i = x$ (such $i$ always exists since the equation violates~\labelcref{form12}), the extract outputs $p_s' = f\ v_1\ \ldots\ v_k$ and $x = v_i$.
Since all arguments of $f$ are abstracted, the lemma proposed by this tactic must satisfy \labelcref{form2}. As a result, the lemma is induction-friendly.

For example, consider proving $\forall${\tt x, y, z. plus3 y z x = plus (plus x y) z}. Note that this equation satisfies \labelcref{form11} but not \labelcref{form12}. We choose the subprogram {\tt plus x y} and abstract it into $p_s' = $ {\tt plus a b}. Note that {\tt x} appears as the first argument, thus the algorithm outputs $(p_s',{\tt a})$, which requires to synthesize a lemma $\forall${\tt a, b. plus a b = plus' b a}. As long as the lemma is synthesized, we can replace {\tt plus x y} to {\tt plus' y x}, making the equation satisfying~\labelcref{form12}.

\subsection{Properties}

First, we show the soundness of \toolname{}, which is straightforward.
\begin{theorem}[Soundness]
    If \toolname{} proves an input goal, then the goal is true.
\end{theorem}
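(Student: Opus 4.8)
The plan is to prove the statement by induction on the height of the (necessarily finite) recursion tree of the top-level call to \texttt{Prove}, once we fix what ``the goal is true'' means: a goal $(\asserts,\peq)$ is true when $\asserts\models\peq$ in the standard model of the ADTs and the built-in types, i.e.\ every valuation of the free variables that satisfies all equations in $\asserts$ also satisfies $\peq$. This is the reading under which the inductive hypotheses carried inside $\asserts$ are meaningful. By inspection of \Cref{fig:code}, a successful run of \texttt{Prove}$(\asserts,\peq)$ returns through exactly one of three branches, so it suffices to establish soundness of each branch assuming, as the induction hypothesis, that every strictly shorter successful run yields a true goal. The first branch -- \texttt{try\_deductive} succeeds -- is immediate from the assumed soundness of the underlying deductive solver, which performs only standard deductive steps such as term reduction and rewriting with premises; this is the one point at which the argument rests on an assumption rather than a proof, and it is where \toolname{} inherits the trust base of the chosen decision procedure ({\sc Cvc4Ind}'s deductive module in our implementation).

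For the second branch, \texttt{induction-friendly}$(\peq)$ holds and \texttt{Prove} returns on every subgoal in \texttt{split}$(\asserts,\peq)$. Here \texttt{split} performs a structural induction on one ADT variable $v$ of $\peq$: for each constructor $c$ it emits the goal obtained by substituting $v\mapsto c(\dots)$ throughout and adding to the premises the hypothesis $\peq[v\mapsto v']$ for every recursive child $v'$ of $c(\dots)$. Each subgoal is discharged by a strictly shorter run, hence is true by the induction hypothesis, and Noetherian induction over the finitely generated ADT then yields $\asserts\models\peq$. The only thing to check is the routine bookkeeping that \texttt{split} introduces exactly the correct instances of the hypothesis and substitutes the constructor forms consistently; granting that, this is the standard soundness of structural induction and does not depend on whether $\peq$ is induction-friendly (that property is used only for progress, cf.\ \Cref{thm:friendly}).

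For the third branch, a tactic applies, \texttt{t\_apply} returns a transformed equation $\peq'$ together with a synthesized equation $\mathtt{lem}$, and \texttt{Prove} returns on both $(\asserts,\mathtt{lem})$ and $(\asserts\cup\{\mathtt{lem}\},\peq')$. By the induction hypothesis, $\asserts\models\mathtt{lem}$ and $\asserts\cup\{\mathtt{lem}\}\models\peq'$. It then remains to note that \texttt{apply\_lem} builds $\peq'$ from $\peq$ by replacing a subprogram of $\peq$ matching one (universally quantified) side of $\mathtt{lem}$ with the corresponding instance of the other side, so by congruence of equality $\mathtt{lem}\models(\peq\leftrightarrow\peq')$; chaining the three entailments gives $\asserts\models\peq$. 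Soundness here is independent of how \texttt{syn\_lem} produced its candidate -- it is only test-validated -- because whatever equation it returns is re-proved by the recursive call $\texttt{Prove}(\asserts,\mathtt{lem})$, and only that recursive proof is used.

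I expect the main obstacle to be stating the second branch precisely: pinning down which instances of the inductive hypothesis \texttt{split} adds and verifying that this coincides with the well-founded induction scheme for the ADT, so that the appeal to its soundness is rigorous, together with the minor check that the rewrite performed by \texttt{apply\_lem} is an equivalence modulo $\mathtt{lem}$. Everything else is a direct structural induction on the run of \texttt{Prove}, matching the paper's remark that soundness is ``straightforward''.
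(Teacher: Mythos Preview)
Your proposal is correct and follows the same approach as the paper: both reduce soundness to the soundness of the underlying proof steps (deductive reasoning, structural induction, and rewriting by a proved lemma). The paper's proof is a one-sentence sketch to this effect, whereas you spell out the argument as an explicit induction on the recursion tree of \texttt{Prove} and verify each branch; in particular, your observation that the test-validated lemma need not be trusted because it is re-proved recursively is exactly the point the paper relies on implicitly.
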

\begin{proof}
    The proof of the input equation searched by \toolname{} is a sequence of induction, reduction, and application of lemmas. Thus, the soundness of \toolname{} follows from the soundness of these standard tactics.
\end{proof}

\paragraph{\bf Progress} 
As mentioned in \Cref{sec:integrate}, the effectiveness of \toolname{} comes from the following progress theorem, where the proof is deferred to \Cref{app:tactics}. 

\begin{theorem}[Progress]
\label{thm:progress}
    Starting from any goal, if all lemmas are successfully synthesized, the initial goal can be eventually transformed into an induction-friendly form.
\end{theorem}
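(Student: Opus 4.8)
The plan is a well-founded induction on the syntactic size (number of AST nodes) of the equation in the current goal. The one structural fact to notice first is that in the procedure \texttt{Prove} of \Cref{fig:code} the \texttt{split}-by-induction branch fires only when the equation is already induction-friendly; so it suffices to show that the ``tactic phase'' --- the part of the recursion that only applies Tactic~1 and Tactic~2, under the standing hypothesis that every lemma-synthesis subproblem is solved so that each tactic really does return its two child goals --- is well-founded, and that every equation at which it comes to rest lies in \labelcref{form1} or \labelcref{form2}.

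I would first dispose of the bookkeeping. When the deductive solver fails on a goal whose equation $\peq$ is not induction-friendly, exactly one tactic precondition holds: by definition Tactic~1 applies iff $\peq$ violates \labelcref{form11}, and Tactic~2 applies iff $\peq$ satisfies \labelcref{form11} but violates \labelcref{form12}; since \labelcref{form1} is the conjunction of \labelcref{form11} and \labelcref{form12}, and \labelcref{form2}-equations are themselves induction-friendly, these two preconditions cover every case that still has to be advanced. One also has to check that the \texttt{extract} routines are total on their preconditions --- i.e.\ that an equation violating \labelcref{form11} always exhibits the non-leaf application with a non-variable argument that Tactic~1 dissects, and similarly for Tactic~2; this is a short inspection of the \textsf{PROG} grammar of \Cref{fig:syntax}, the only delicate point being ``flat'' equations (both sides a primitive-operator or constructor application over bare variables, or containing a repeated variable), which are in practice either refuted or reduced away by the deductive solver, or else can be excluded by a well-formedness assumption on the input.

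The heart of the argument is then a two-case analysis. Tactic~2 is immediate: its \texttt{extract} yields a lemma $\forall a,b.\ f\,a\,b = f'\,b\,a$ whose two sides are each a single CSR call over distinct variables, hence it is in \labelcref{form2}; and rewriting $\peq$ with it relocates the recursive argument into a recursive-argument slot of a CSR without touching the single-call side, so the transformed goal satisfies both \labelcref{form11} and \labelcref{form12}, i.e.\ it is in \labelcref{form1}. By \Cref{thm:friendly} both children are induction-friendly, so this branch ends at once. For Tactic~1, the transformed goal $\peq_2$ is $\peq$ with a subterm $c\,p_1\cdots p_k$ (at least one $p_i$ a non-variable) collapsed into a fresh CSR call $f^*\,\bar a$ whose arguments are exactly the subterms that were abstracted; a short count shows this strictly shrinks the size of that side --- by the number of non-variable $p_i$'s, which is at least $1$ --- so $\lvert\peq_2\rvert < \lvert\peq\rvert$ and the induction hypothesis handles the $\peq_2$ branch. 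Its lemma goal $\peq_1$, namely $\forall\tilde v.\forall v.\ f^*\,\tilde v\,v = p_s'$, has a single CSR call over distinct variables as its left side, so it already satisfies \labelcref{form11}; hence $\peq_1$ is either already in \labelcref{form1}, or it satisfies \labelcref{form11} but not \labelcref{form12}, and in the latter case Tactic~2 applies to it and, by the previous paragraph, closes it in one further step. Assembling the two cases, strong induction on $\lvert\peq\rvert$ yields the theorem.

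I expect the main obstacle to be pinning down a measure under which \emph{both} children of a single tactic step make progress simultaneously. The transformed goal shrinks in size, but the synthesized lemma is not a subterm of $\peq$ and can be syntactically larger; the way out sketched above --- observing that the lemma's left side is always a fresh single CSR call over distinct variables, so it automatically satisfies \labelcref{form11} and is therefore at most one Tactic~2 step away from being induction-friendly --- is what lets the recursion bottom out without any measure on lemma goals at all. The remaining, more tedious, obstacle is the totality/exhaustiveness check of the second paragraph: ensuring the \texttt{extract} routines never get stuck on a goal that is neither induction-friendly nor amenable to either tactic, which forces one to either discharge or separately argue the degenerate ``flat'' equations.
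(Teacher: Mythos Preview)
Your decomposition mirrors the paper's: first handle goals that already satisfy \labelcref{form11} (where only Tactic~2 fires), then reduce the general case via Tactic~1 to that one. The paper packages the first stage as an auxiliary lemma and uses two bespoke measures rather than your single AST size: a composition count $\psi$ that drops under Tactic~1, and a count $\phi$ of occurrences of $v_k$ in non-recursive argument slots that drops under Tactic~2. Your size argument for the Tactic~1 child $\peq_2$ is sound and plays the role of $\psi$, and your observation that every Tactic~1 lemma already satisfies \labelcref{form11} --- hence is covered by the first stage --- is exactly the paper's reduction.

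The gap is your claim that the Tactic~2 branch ``ends at once.'' You assert that one rewrite puts the transformed goal into \labelcref{form1}, but \labelcref{form12} demands not just that $v_k$ occupy the recursive slot of some CSR call, but that \emph{no other argument of that call contain $v_k$}. When $v_k$ occurs in several argument positions of the targeted call, the swap moves one copy to the recursive slot while another copy remains (or is displaced into) a non-recursive slot, so \labelcref{form12} still fails; and since a Tactic~2 rewrite merely permutes the arguments at a single node, your AST-size measure does not decrease either. The paper does not claim a single step suffices: it iterates Tactic~2, arguing $\phi$ drops by one each time. You need the same iterated argument both here and in your handling of the Tactic~1 lemma, whose closure also rests on the one-step claim.
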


\section{Evaluation}
\label{sec:evaluation}

We implement \toolname{} on top of {\sc Cvc4Ind}~\cite{cvc4ind}, an extension of {\sc Cvc4} with induction and the available\footnote{{\sc Pirate}~\cite{Pirate} is reported to have better performance than {\sc Cvc4Ind} on \emph{standard benchmarks} in our evaluation, but its code and its experimental data are not publicly accessible. Thus, we do not compare our approach against {\sc Pirate}. Note that \toolname{} can be combined with any deductive solver, including {\sc Pirate}.} state-of-the-art prover for proving equivalence between functional programs. We choose {\sc AutoLifter}~\cite{autolifter} as the underlying synthesizer, which can solve the synthesis tasks in \Cref{sec:routine} over randomly generated tests.
{\sc Cvc4Ind} comes with a lemma enumeration module, our implementation invokes only the deductive reasoning module of {\sc Cvc4Ind}. 
To compare the lemma enumeration with directed lemma synthesis, we evaluate \toolname{} against {\sc Cvc4Ind}.

\begin{small}
\begin{table}[t]
\setlength{\tabcolsep}{5pt}
\begin{center}
\caption{Experimental results over the benchmarks.}
\label{tab:exp}
\begin{tabular}{cccccc}
\toprule
 & \makecell{\#Solved\\(Standard)} & \makecell{\#Solved\\(Extension)}  & \makecell{\#Solved\\(Total)}& \makecell{\#Fails\\ (Timeout) } &\makecell{AvgTime}\\
\midrule
\toolname{}     &
\makecell{\textbf{140}\\($\bf \uparrow16.67\%$) }
& \makecell{\textbf{21}\\($\bf \uparrow600\%$)} & \makecell{\textbf{161}\\($\bf \uparrow30.89\%$)} & \textbf{109}    & \makecell{\textbf{3.64s}\\($\bf \downarrow95.47\%$)} \\
\midrule
\sc{Cvc4Ind} &  120 & 3 & 123 & 147   & 80.36s \\
\bottomrule
\end{tabular}
\end{center}
\vspace*{-2\baselineskip}
\end{table}
\end{small}

\paragraph{\bf {Dataset}}
We collect 248 \emph{standard benchmarks} from the equivalence checking subset of CLAM~\cite{Rippling}, Isaplaaner~\cite{johansson2010case}, and ``Tons of Inductive problems'' (TIP)~\cite{claessen2015tip}, which have been widely employed in previous works~\cite{Rippling,johansson2010case,cvc4ind,ZipperpositionInd,adtind}.
We observe that these benchmarks do not consider the mix of ADTs and other theories (e.g., LIA for integer manipulation), which is also an important fragment in practice~\cite{LIA1,LIA2,LIA3,LIA4,LIA5}.
Thus, we created 22 \emph{additional benchmarks} combining the theory of ADTs and LIA by converting ADTs to primitive types in existing benchmarks, such as converting {\tt Nat} to {\tt Int}. Our test suite thus consists of 270 benchmarks in total.

\paragraph{\bf {Procedure}} We use our implementation and the baseline to prove the problems in the benchmarks. We set the time limit as 360 seconds for solving an individual benchmark, the default timeout of {\sc Cvc4Ind} and is aligned with previous work~\cite{cvc4ind,vampireind,ZipperpositionInd,adtind}. We obtain all results on the server with the Intel(R) Xeon(R) Platinum 8369HC CPU, 8GB RAM, and the Ubuntu 22.04.2 system.

\paragraph{\bf {Results}}
The comparison results are summarized in Table \ref{tab:exp}.
Overall, \toolname{} solves 161 benchmarks, while the baseline {\sc Cvc4Ind} solves 123, showing that directed lemma synthesis can make an enhancement with a ratio of 30.89\%.
On the solved benchmarks, \toolname{} takes 3.64s on average, while {\sc Cvc4Ind} takes 80.36s, indicating that directed lemma synthesis can save 95.47\% runtime.
The results justify our motivation: compared with the directionless lemma enumeration, directed lemma synthesis can avoid wasting time on useless lemmas.
Note that \toolname{} shows significant strength on additional benchmarks with a mixed theory. This is because the tactics and induction-friendly forms in our approach are \emph{purely syntactic}, making \toolname{} \emph{theory-agnostic}. In contrast, {\sc Cvc4Ind} is \emph{theory-dependent}. Thus, it is hard for {\sc Cvc4Ind} to tackle benchmarks with mixed theories.

\paragraph{\bf {Discussion}} We observe that in the failed cases, the failure to synthesize a lemma is a common cause, and this in turn is due to two reasons. The first one is that the program synthesizer fails to produce a solution for a solvable synthesis problem. For example, one equation involves an exponential function, whose implementation is extremely slow on ADT types, and the synthesizer timed out on executing the randomly generated tests. The second one is that the potential lemma requires a structural recursion that is not canonical. Though in theory such a structural recursion can be converted into compositions of CSRs, our current algorithm only supports the synthesis of CSRs, and thus cannot synthesize such lemmas. This observation shows that, if we can further improve program synthesis in future, our approach may prove more theorems.

\section{Related Work}
\label{sec:related}

\paragraph{\bf Lemma Finding in Inductive Reasoning}
Due to the necessity, the lemma finding algorithm has been integrated into various architectures of inductive reasoning, including theory exploration~\cite{thesy,HipSpec}, superposition-based provers~\cite{ZipperpositionInd,SupIndMath,ImandraSystem,vampireind}, SMT solvers~\cite{cvc4ind,adtind,fossil,IC3+}, and other customized approaches~\cite{zeno,kunkakautograde}. These approaches can be divided into two categories.

First, most of these approaches~\cite{HipSpec,ZipperpositionInd,SupIndMath,kunkakautograde,ImandraSystem,vampireind,cvc4ind,thesy,zeno,adtind} apply lemma enumeration based on heuristics or user-provided templates, which often produce lemmas with little help to the proof, leading to inefficiency, as we have discussed in~\Cref{sec:intro}. Compared with these approaches, \toolname{} considers the \emph{directed} lemma synthesis and application, eventually producing subgoals in induction-friendly forms.

Second, there are approaches~\cite{fossil,IC3+} considering the lemma synthesis over a decision procedure based on bounded quantification and pre-fixed point computation. These approaches are restricted to structural recursions without nested function invocations or constructors, which cover only 19/248~(7\%) benchmarks in our test suite~(\Cref{sec:evaluation}).

\paragraph{\bf Other Approaches in Functional Program Verification}
There are other approaches~\cite{dafny,liquidhaskell,leon,fluid} verifying the properties of functional programs \emph{without} induction. These tools require the user to manually provide an induction hypothesis. Thus, these approaches cannot prove any benchmark in our test suite~(\Cref{sec:evaluation}).

\paragraph{\bf Invariant Synthesis}
Lemma synthesis has also been applied to verifying the properties of imperative programs~\cite{nonlinear-invariant,ICE}, where the lemma synthesis is often recognized as \emph{invariant synthesis}. Since the core of imperative programs is the mutable atomic variables and arrays instead of ADTs, previous approaches for invariant synthesis~\cite{nonlinear-invariant,ICE} cannot be applied to our problem. It is future work to understand whether we can extend \toolname{} for verifying imperative programs.

\section{Conclusion}
\label{sec:conclusion}
We have presented \toolname{}, a prover for verifying the equivalence between functional programs, with a novel {directed lemma synthesis} engine. The conceptual novelty of our approach is the induction-friendly forms, which are propositions that give formal guarantees to the progress of the proof. We identified two forms and proposed two tactics that synthesize and apply lemmas, transforming the proof goal into induction-friendly forms. Both tactics reduce lemma synthesis to a specialized class of program synthesis problems with efficient algorithms. We conducted experiments, showing the strength of our approach. In detail, compared to state-of-the-art equivalence checkers employing heuristic-based lemma enumeration, directed lemma synthesis saves 95.47\% runtime on average and solves 38 more tasks over a standard benchmark set.

 \bibliographystyle{splncs04}
 \bibliography{PL.bib}
 \appendix
 \clearpage
\section{Missing Details in~\Cref{sec:intro}}
\label{app:intro}

We prove the following theorem.
\begin{theorem}\label{thm1}
    It is impossible to prove~\eqref{eq:prop} without induction.
\end{theorem}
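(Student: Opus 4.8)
The plan is to argue semantically about what a pure deductive (induction-free) proof can do, and show it can never close the goal \eqref{eq:prop}. First I would make precise the notion of "proving without induction": such a proof is a finite sequence of rewriting steps, each being either a term reduction (unfolding a CSR according to its pattern match, or applying a built-in operator), an application of an already-established equational fact, a congruence step, or generalization; crucially, no step introduces an induction hypothesis about a proper substructure. The key invariant I would track along any such derivation is the set of ground instances that the derivation actually \emph{uses}. Because the only ADT-destructing step available without induction is reduction on a term whose head ADT argument is a \emph{constructor} application, a deductive proof of $\forall \mathtt{xs}.\,P(\mathtt{xs})$ can only ever "see" finitely many constructor layers of $\mathtt{xs}$; more carefully, I would show that any deductive derivation of the universally quantified goal must in fact be a derivation of the \emph{open} goal $P(\mathtt{xs})$ with $\mathtt{xs}$ a free variable, since none of the permitted steps can perform a case split on $\mathtt{xs}$.

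The heart of the argument is then a normal-form / stuck-term analysis. I would compute the normal form of both sides of \eqref{eq:prop} with $\mathtt{xs}$ kept as an opaque free variable: the left side reduces to a stuck term rooted at $\mathtt{sum\,(rev\,xs)}$ (no reduction applies, since $\mathtt{rev}$ is blocked on the variable $\mathtt{xs}$), and likewise the right side is stuck at $\mathtt{sum\,(sort\,xs)}$. These two normal forms are syntactically distinct and, moreover, \emph{no finite set of the equational consequences available at this point} (there are none beyond the function definitions themselves, and those are already captured by reduction) rewrites one into the other: any available equation has both sides again stuck on $\mathtt{xs}$ in an incompatible head position. The only remaining deductive move is generalization, and I would note generalization can only \emph{remove} structure, never create the matching subterm needed, so it cannot help either. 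Hence no finite deductive derivation connects the two sides, giving the impossibility.

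The main obstacle I anticipate is making the "deductive steps cannot case-split on a free variable" claim airtight, because the surface language permits nested CSR calls and one must rule out some clever chain of lemma applications that effectively simulates a case analysis. I would handle this by a rank/measure argument: assign to every term its "ADT-variable exposure" (the maximal number of constructor layers of $\mathtt{xs}$ appearing above any occurrence of $\mathtt{xs}$), observe this measure stays $0$ under every non-inductive step when started from \eqref{eq:prop} (reduction never fabricates a $\mathtt{cons}$ or $\mathtt{nil}$ at the root of a subterm that was a bare variable), and conclude that the derivation stays within the fragment where $\mathtt{rev\,xs}$ and $\mathtt{sort\,xs}$ are permanently stuck and provably non-joinable. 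A secondary subtlety is that one genuinely uses a property of these particular functions — e.g. that $\mathtt{rev}$ and $\mathtt{sort}$ are not extensionally equal, witnessed by $\mathtt{xs}=\mathtt{cons\,1\,(cons\,0\,nil)}$ where $\mathtt{sum\circ rev}$ and $\mathtt{sum\circ sort}$ nonetheless agree — so the impossibility is not because the statement is false but because the only ground witnesses that distinguish $\mathtt{rev}$ from $\mathtt{sort}$ are irrelevant to $\mathtt{sum}$; I would phrase the final contradiction as: a deductive proof would yield a proof of the strictly stronger (and false) statement $\forall \mathtt{xs}.\,\mathtt{rev\,xs}=\mathtt{sort\,xs}$ after stripping the outer $\mathtt{sum}$, which is absurd.
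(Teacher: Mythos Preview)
Your approach is genuinely different from the paper's. The paper gives a \emph{model-theoretic} argument: it writes down the defining equations of all the involved functions as first-order axioms, then constructs a non-standard model of these axioms in which {\tt List} is enlarged by certain ``infinite'' lists $(x,i)$; on this enlarged domain it redefines {\tt rev}, {\tt sort}, {\tt snoc}, {\tt ins}, {\tt sum} so that every axiom is still satisfied, yet for the particular non-standard element ${\tt xs}=((-1)0\cdots,1)$ one gets ${\tt sum\,(rev\,xs)}=0\neq -1={\tt sum\,(sort\,xs)}$. Soundness of first-order deduction then immediately gives unprovability. This is short, standard, and does not require any analysis of derivations.

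Your proof-theoretic route---showing both sides are stuck normal forms on the open variable {\tt xs} and hence non-joinable---can in principle be made to work, but the version you sketch has a real gap. The step ``a deductive proof would yield a proof of the strictly stronger (and false) statement $\forall{\tt xs}.\ {\tt rev\,xs}={\tt sort\,xs}$ after stripping the outer {\tt sum}'' is not valid: equational logic does not let you cancel a function symbol, and there is nothing preventing, a priori, an equational derivation that rewrites through {\tt sum} without ever equating its arguments. What you actually need in place of that step is a \emph{confluence} argument: the oriented defining equations form an orthogonal (left-linear, no critical pairs) and terminating rewrite system, hence confluent; by Church--Rosser, two open terms are provably equal in the equational theory iff they share a normal form. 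Since ${\tt sum\,(rev\,xs)}$ and ${\tt sum\,(sort\,xs)}$ are distinct normal forms, you are done. Your rank/measure discussion gestures toward ``nothing can unstick the terms,'' but that alone does not rule out a derivation that uses axioms right-to-left or in some detour; confluence is exactly the missing global property that closes that hole. Without it, your final contradiction does not go through.

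In short: the paper's non-standard model is the cleaner route here and sidesteps all of the derivation bookkeeping; your syntactic route is salvageable, but only if you replace the ``strip {\tt sum}'' step by a proper confluence/normal-form argument.
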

\begin{proof}
We prove this claim following the model-theoretic idea of the previous work~\cite{modeltheory, Skolem1934, fluid}.
To investigate inductive reasoning without lemmas, we consider a subsystem ${\sf AX}$ of $\folfp$, whose axioms consist of ADTs and RDF, and inference rules are unfolding and applying an axiom.
    \begin{align}
        {\tt nil}&\in {\tt List}\label{app1}\\
        \forall h, t. (h\in {\tt Int}\land t\in {\tt List})&\rightarrow ({\tt cons}\ h\ t\in {\tt List})\label{app2}\\
        {\tt sum}\ {\tt nil} &= 0\label{app3}\\
        \forall h: {\tt Int}, t: {\tt List}.\ {\tt sum}\ ({\tt cons\ h\ t}) &= h + {\tt sum}\ t\label{app4}\\
        {\tt rev}\ {\tt nil} &= {\tt nil}\label{app5}\\
        \forall h: {\tt Int}, t: {\tt List}.\ {\tt rev}\ ({\tt cons\ h\ t}) &= {\tt snoc}\ h\ ({\tt rev}\ t)\label{app6}\\
        {\tt sort}\ {\tt nil} &= {\tt nil}\label{app7}\\
        \forall h: {\tt Int}, t: {\tt List}.\ {\tt rev}\ ({\tt cons\ h\ t}) &= {\tt ins}\ h\ ({\tt sort}\ t)\label{app8}\\
        \forall x: {\tt Int}.\ {\tt snoc}\ x\ ({\tt nil}) &= {\tt cons}\ x\ {\tt nil}\label{app9}\\
        \forall x, h: {\tt Int}, t: {\tt List}.\ {\tt snoc}\ x\ ({\tt cons}\ h\ t) &= {\tt cons}\ h\ ({\tt snoc}\ x\ t)\label{app0}\\
        \forall x: {\tt Int}.\ {\tt ins}\ x\ ({\tt nil}) &= {\tt cons}\ x\ {\tt nil}\label{app11}\\
        \label{app12}\forall x, h: {\tt Int}, t: {\tt List}.\ {\tt ins}\ x\ ({\tt cons}\ h\ t) &=
        \begin{cases}
            {\tt cons}\ x\ l & x\le h\\
            {\tt cons}\ h\ ({\tt ins}\ x\ t) & \neg(x\le h)
        \end{cases}
    \end{align}

Then, we construct models for {\tt List, nil, cons, sum, rev, sort, snoc, ins, xs} so that all axioms~(\eqref{app1}--\eqref{app12}) above hold, but {\tt sum (rev $\tt xs$)} $\ne$ {\tt sum (sort $\tt xs$)}.

We define {\tt List} consists of standard finite lists and all elements $(x,i)$ such that (1) $i\ge 0$, and (2) $x$ is an infinite list and $\forall j\ge i, x_j=0$. {\tt nil} is the empty list as usual. For a non-standard list $t = (x,i)$, we define:
\begin{itemize}
    \item ${\tt cons}\ h\ t$ as $(x', i+1)$ where $x'$ is the output after prepending $h$ into $x$.
    \item {\tt sum}\ $t$ as $\sum_{j=1}^i{x_j}$.
    \item {\tt rev}\ $t$ as $(0\cdots,0)$.
    \item {\tt snoc}\ $h$ $t$ as $t$.
    \item {\tt sort}\ $t$ as $(x',i')$ where $x'_0,\ldots, x'_{i'-1}$ is the sorted non-positive elements of $x_0\ldots,x_{i-1}$.
    \item {\tt ins}\ $h$ $t$ as follows. If $t = (0\cdots,0)$ and $h\le 0$, then {\tt ins}\ $h$ $t$ is $(h0\cdots,1)$. If $t = (0\cdots,0)$ and $h> 0$, then {\tt ins}\ $h$ $t$ is $(0\cdots,0)$. Otherwise, the definition of {\tt ins}\ $h$\ $t$ follows from \Cref{app12}.
\end{itemize}
It is easy to verify that these definitions satisfy all axioms~(\eqref{app1}--\eqref{app12}) above. Define {\tt xs} to be $((-1)0\cdots,1)$, then {\tt sum (rev $\tt xs$)} $=\ 0$ but {\tt sum (sort $\tt xs$)} $=\ -1$. Thus \Cref{thm1} follows.\qed
\end{proof}

\begin{theorem}\label{thm2}
   It is impossible to prove \eqref{eq:prop} without lemmas.
\end{theorem}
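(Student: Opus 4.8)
The plan is to promote the model-theoretic argument behind \Cref{thm1} from ``no induction'' to ``no lemmas.'' First I would fix the calculus meant by ``without lemmas'': the axioms and unfolding rules of \Cref{thm1}, together with a structural-induction rule that turns a goal $\phi(v)$ over {\tt List} into the subgoals $\phi({\tt nil})$ and $\phi(t)\vdash\phi({\tt cons}\ h\ t)$ (equivalently, case analysis plus the right to reuse structurally-smaller instances of the \emph{current} goal as hypotheses), but with \emph{no} new universally-quantified formula admitted as a side obligation, and possibly with generalization. A proof is then a finite, well-founded tree whose internal nodes are induction/case steps and whose leaves are sequents closed by unfolding and axiom application alone; proving \eqref{eq:prop} without lemmas means exhibiting such a tree whose root is $\emptyset\vdash {\tt sum}\,({\tt rev}\,{\tt xs}) = {\tt sum}\,({\tt sort}\,{\tt xs})$.

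Second I would reuse the structure $M$ constructed for \Cref{thm1} — finite lists together with the eventually-zero non-standard lists, with the interpretations of {\tt rev}, {\tt sort}, {\tt sum}, {\tt snoc}, {\tt ins} given there — recording that $M$ validates every axiom \eqref{app1}--\eqref{app12} and every unfolding, and that the matrix $\phi({\tt xs})$ is falsified in $M$ at ${\tt xs}=((-1)0\cdots,1)$. If the calculus also takes constructor exhaustiveness $\forall l.\ l={\tt nil}\lor\exists h,t.\ l={\tt cons}\ h\ t$ as an axiom, I would first refine $M$ so that the ``bottom'' non-standard list is its own {\tt cons}-unfolding, i.e.\ a non-well-founded list; this keeps all of \eqref{app1}--\eqref{app12} true while still defeating well-founded induction. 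Call a sequent $\Gamma\vdash e$ \emph{refutable in $M$} if some $M$-valuation of its free variables makes every formula of $\Gamma$ true and $e$ false; the root sequent is refutable.

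The heart of the argument is to show that refutability propagates from the conclusion of each rule to at least one premise, whence no valid proof of a refutable sequent can exist: a leaf is discharged using only $M$-valid steps from hypotheses that the refuting valuation makes true, so its conclusion would be forced true, a contradiction. For unfolding, axiom use, rewriting with a hypothesis, and generalization this is routine — the first three preserve the truth values of $\Gamma$ and of $e$ under the refuting valuation (the rewrites are $M$-valid equalities, and a hypothesis is true under that valuation), and generalization's premise implies its conclusion. The only delicate rule is induction/case analysis on a variable $w$: from the refuting value $w\mapsto a$, if $a={\tt nil}$ the base premise stays refutable, and if $a={\tt cons}\ h_0\ t_0$ I must argue the step premise $\Gamma,\phi_w(t)\vdash\phi_w({\tt cons}\ h\ t)$ is refutable, which reduces to the claim that for the \emph{current} goal-formula $\psi$ there are $h_1,t_1$ with $\psi(t_1)$ true but $\psi({\tt cons}\ h_1\ t_1)$ false in $M$.

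Establishing this last claim uniformly is the main obstacle. It requires (i) a syntactic normal form for the goals reachable from $\phi({\tt xs})$: up to reduction and renaming, each is an equation ${\tt sum}(w_1(\cdots({\tt rev}\ t)))={\tt sum}(w'_1(\cdots({\tt sort}\ t)))$ whose $w_i,w'_i$ are {\tt snoc}/{\tt ins} applications with $t$-free side arguments — a ``stuck'' shape closed under all the proof rules — sitting in a context of hypotheses of the same shape; and (ii) for every such $\psi$, a witness $(h_1,t_1)$, obtained as in the base analysis by taking $t_1$ a non-standard list with non-negative standard prefix (so that all the stuck sums on both sides agree) and $h_1<0$ (which breaks the {\tt sort}-side sum while leaving the {\tt rev}-side sum at $0$). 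Once (i) and (ii) are in place, refutability survives every rule, the root remains refutable, and therefore \eqref{eq:prop} admits no lemma-free proof, establishing \Cref{thm2}.
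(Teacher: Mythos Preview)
Your proposal is far more elaborate than the paper's proof, and in a direction that creates real gaps. The paper's argument is two lines: take the same model $M$ from \Cref{thm1} and the same witness ${\tt xs}=((-1)0\cdots,1)$, and simply observe that \eqref{eq:prop} \emph{holds at every substructure of} ${\tt xs}$. Concretely, the only tail of ${\tt xs}$ is $t=(0\cdots,0)$, and there ${\tt sum}({\tt rev}\ t)=0={\tt sum}({\tt sort}\ t)$; yet \eqref{eq:prop} fails at ${\tt xs}$ itself. Since every axiom and unfolding is $M$-valid, and the induction hypothesis \eqref{eq:IH2} is \emph{true} at this $t$ while the goal is false at ${\tt cons}\ (-1)\ t$, the step case of the (single) induction on \eqref{eq:prop} is underivable in the deductive fragment. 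That is the whole proof: the paper reads ``without lemmas'' as ``induction only over the original theorem, then deduction'' (consistent with \Cref{sec:motivating}), so one countermodel-plus-one-substructure check suffices.

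You instead set up a calculus permitting \emph{repeated} induction, generalization, and case analysis, and then try to push a refutability invariant through every rule. That is a different, strictly stronger theorem, and your plan does not close: step (i), the claimed normal form for all reachable goals, is not established and is not obviously stable under generalization (which may introduce fresh variables not of the ``stuck'' shape) or under nested inductions on variables other than the original recursive one; step (ii), producing a uniform witness $(h_1,t_1)$ for \emph{every} such $\psi$, is asserted but not argued, and would have to cope with arbitrary interleavings of {\tt snoc}/{\tt ins} prefixes and arbitrary free integer parameters. Even your proposed repair of $M$ to satisfy constructor exhaustiveness (making the bottom list its own tail) is extra machinery the paper never needs. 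If you restrict to the paper's intended calculus, your whole refutability-propagation apparatus collapses to the single observation above; if you insist on the richer calculus, the hard parts of your sketch remain open.
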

\begin{proof}
    Consider the model in the proof of \Cref{thm1}. Note that \eqref{eq:prop} holds for all substructures of ${\tt xs}$, thus \Cref{thm2} follows.
    \qed
\end{proof}

 \section{Missing Details in \Cref{sec:problem}}
\label{app:prelims}
\subsection{Evaluation Rules}
There are three rules for evaluation. We define built-in values to be the constants with the built-in type.
\begin{align*}
    f\ p_1\ p_2\ldots\ p_{k-1}\ {\tt nil} &\rightarrow base(p_1,\ldots, p_{k-1})\\
    f\ p_1\ p_2\ldots\ ({\tt cons}\ p_h\ p_t) &\rightarrow comb(p_1,\ldots, p_{k-1}, p_h, f\ p_1\ldots p_{k-1}\ p_t)\\
    op\ v_1\ v_2\ldots\ v_l &\rightarrow v^*
\end{align*}
where $f$ is a CSR with $k$ arguments, $op$ is a primitive-operator for built-in types with $l$ arguemts, $v_1\ldots v_{l}$ are built-in values, and $v^*$ is the constant computed by the definition of $op$.

\subsection{Expressivity}
We present the proof for lists, which can be generalized to other ADTs. A general structural recursion is in the following form, where there is always an argument $v$ that is decreasing, and other parameters may change along the recursion.

\begin{center}
\begin{tabular}{l}
{\tt\bf Let} $f$ $\bar{v}$ $\ldots$ ($v${\tt : List}) = \\
{\tt\bf match} $v$ {\tt\bf with}  \\
 {\tt | nil} $\rightarrow$\ \  $base$    \\
 {\tt | cons h t} $\rightarrow$  {\tt{\bf Let} r }$=\ f$\ \  $\psi(v,\bar{v})$ \ {\tt t} {\tt\bf in} $comb$\\
{\bf end};\\
\end{tabular}
\end{center}
We define another datatype to record the recursion history.
$$\texttt{Inductive List2 = nil2 [List] | cons2 [List] Int List2};$$
where $\tt [List]$ refers to list of lists.
Then, we define follows functions.

\begin{center}
\begin{tabular}{l}
{\tt\bf Let} $\tt snoc2$ $\tt v$ $\tt a$ = \\
{\tt\bf match} $\tt a$ {\tt\bf with}  \\
 {\tt | []} $\rightarrow$\ \  {\tt [v]}    \\
 {\tt | h3::t3} $\rightarrow$  {\tt{\bf Let} r }$=$ $\tt snoc2$ $\tt v$ $\tt a$ {\tt\bf in}
 {\tt h3::r}
 \\
{\bf end};\\
{\tt\bf Let} $\tt mapsnoc2$ $\tt v$ $\tt r$ = \\
{\tt\bf match} $\tt r$ {\tt\bf with}  \\
 {\tt | nil2 a} $\rightarrow$\ \  {\tt nil2 (snoc2 v a)}    \\
 {\tt | cons2 his h t2} $\rightarrow$  {\tt{\bf Let} r }$=$ $\tt mapsnoc2$ $\tt v$ $\tt t$ {\tt\bf in}
 {\tt cons2 (snoc2 v his) h r}
 \\
{\bf end};\\
{\tt\bf Let} $\tt gen\ v$ = \\
{\tt\bf match} $\tt v$ {\tt\bf with}  \\
 {\tt | nil} $\rightarrow$\ \  {\tt nil2 []}    \\
 {\tt | cons h t} $\rightarrow$  {\tt{\bf Let} r = gen t} {\tt\bf in}
 {\tt cons2 [v] h (mapsnoc2 v r)}
 \\
{\bf end};\\
{\tt\bf Let} $\tt exec2\ \bar{v}\ a$ = \\
{\tt\bf match} $\tt a$ {\tt\bf with}  \\
 {\tt | []} $\rightarrow$\ {$\tt \bar{v}$}    \\
 {\tt | h3::t3} $\rightarrow$ {\tt{\bf Let} r = exec2 $\tt \bar{v}$ t3} {\tt\bf in} $\tt \psi(h3,r)$
 \\
{\tt\bf Let} $\tt exec\ \bar{v}\ genv$ = \\
{\tt\bf match} $\tt genv$ {\tt\bf with}  \\
 {\tt | nil2 a} $\rightarrow$\ \  {\tt{\bf Let} tmp = exec2 $\tt \bar{v}$ a} {\tt\bf in} {\tt base(tmp2)}    \\
 {\tt | cons2 his h t2} $\rightarrow$ {\tt{\bf Let} tmp = exec2 $\tt \bar{v}$ his} {\tt\bf in}\\
 \quad\quad\quad\quad\quad\quad\quad\quad\quad
 {\tt{\bf Let} r = exec $\tt \bar{v}$ t} {\tt\bf in}
 {\tt comb(tmp, h, r)}
 \\
{\bf end};\\
\end{tabular}
\end{center}

Thus, $f\ \bar{v}\ v = {\tt exec}\ \bar{v}\ ({\tt gen}\ v)$.

 \section{Missing Details in \Cref{sec:tactics}}
\label{app:tactics}
\subsection{Proof of \Cref{thm:friendly}}
\begin{proposition}\label{prop1}
\labelcref{form1} is induction-friendly.
\end{proposition}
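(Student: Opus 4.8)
The plan is to show that, for any equation of \labelcref{form1}, structural induction on a suitable variable produces an \emph{effective application} of the induction hypothesis --- i.e.\ an induction-hypothesis rewrite followed by a generalization of a common subprogram --- in every case that involves a recursive substructure. For \texttt{List} the only such case is \texttt{cons}, so this is the only case I need to analyze; the \texttt{nil} case is a base case.

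So fix an equation of \labelcref{form1}; by \labelcref{form11}, up to swapping the two sides it has the shape $f\ v_1\ \ldots\ v_k = p(v_1,\ldots,v_k)$ where $f$ is a CSR pattern-matching on $v_k$, the $v_1,\ldots,v_k$ are pairwise distinct, and $p$ satisfies \labelcref{form12}. I would run the induction on $v_k$, the recursive argument of $f$. In the \texttt{cons} case, substitute $v_k = \texttt{cons}\ h\ t$ with $h,t$ fresh; since $v_k \notin \{v_1,\ldots,v_{k-1}\}$ this leaves the other arguments untouched, and the induction hypothesis is $f\ v_1\ \ldots\ v_{k-1}\ t = p(v_1,\ldots,v_{k-1},t)$. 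Reducing the $f$-side with the CSR evaluation rule (\Cref{app:prelims}) turns it into $comb_f(v_1,\ldots,v_{k-1},h,\ f\ v_1\ \ldots\ v_{k-1}\ t)$, whose recursive call is syntactically the left-hand side of the induction hypothesis. Rewriting with the hypothesis replaces it by $p(v_1,\ldots,v_{k-1},t)$. This is the induction-hypothesis-rewrite half of an effective application, enabled by exactly the single-call requirement of \labelcref{form11}.

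For the generalization half I would invoke \labelcref{form12} and split on whether $v_k$ occurs in $p$. If it does not, then on the $p$-side the \texttt{cons}-substitution has no effect, so after the rewrite above the goal is $comb_f(v_1,\ldots,v_{k-1},h,\ p(v_1,\ldots,v_{k-1})) = p(v_1,\ldots,v_{k-1})$ and $p(v_1,\ldots,v_{k-1})$ is a subprogram common to both sides, which we generalize. If $v_k$ does occur, \labelcref{form12} supplies an occurrence that is the recursive (last) argument of some CSR $g$, i.e.\ $p$ contains a subterm $g\ q_1\ \ldots\ q_{m-1}\ v_k$ with each $q_i$ free of $v_k$. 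Under the \texttt{cons}-substitution this subterm becomes $g\ q_1\ \ldots\ q_{m-1}\ (\texttt{cons}\ h\ t)$ (the $q_i$ are unchanged), a full CSR application to a \texttt{cons}, hence it reduces to $comb_g(q_1,\ldots,q_{m-1},h,\ g\ q_1\ \ldots\ q_{m-1}\ t)$; so $g\ q_1\ \ldots\ q_{m-1}\ t$ occurs on the $p$-side. On the other side, the subterm $p(v_1,\ldots,v_{k-1},t)$ introduced by the induction-hypothesis rewrite contains the very same $g\ q_1\ \ldots\ q_{m-1}\ t$, again because the $q_i$ do not mention $v_k$. That term is a non-leaf subprogram common to both sides, so generalizing it to a fresh variable completes the effective application. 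This is exactly the reasoning sketched for \texttt{sapp x y z = sum (app (app y z) x)} in \Cref{sec:indfriend-forms}, now made general.

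I expect the main obstacle to be precision rather than difficulty: one must fix a concrete reduction sequence and argue that the distinguished occurrence from \labelcref{form12} genuinely lands in redex position after the \texttt{cons}-substitution, and that ``the same'' subterm $g\ q_1\ \ldots\ q_{m-1}\ t$ appears verbatim on both sides. The clean way to do this is with one-hole contexts: write $p = E[\,g\ q_1\ \ldots\ q_{m-1}\ v_k\,]$ with $v_k$ occurring in neither $E$ nor any $q_i$, so that both sides become instances of one common context with a shared redex in the hole, and the distinctness of $v_1,\ldots,v_k$ guarantees the substitutions and the CSR reduction behave as claimed. The only genuine corner case is $v_k \notin p$ with $p$ itself a leaf (a variable or a constant): there the ``common subprogram'' is that leaf, which is still admissible under the formal definition of an effective application.
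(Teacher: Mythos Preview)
Your proposal is correct and follows essentially the same route as the paper's proof: induct on $v_k$, reduce the $f$-side in the \texttt{cons} case to expose $f\ v_1\ \ldots\ v_{k-1}\ t$, rewrite by the induction hypothesis, and then use \labelcref{form12} to locate a common subterm (either all of $p$ when $v_k\notin p$, or the recursive call $g\ q_1\ \ldots\ q_{m-1}\ t$ otherwise) for generalization. Your version is in fact more careful than the paper's---the one-hole-context argument and the explicit case split are exactly the precision the paper's sketch elides---though note that the paper's informal definition of ``effective application'' asks for a \emph{non-leaf} common subprogram, so your leaf corner case is not literally covered by that phrasing (the paper's own proof simply ignores this edge case).
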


\begin{proposition}\label{prop2}
\labelcref{form2} is induction-friendly.
\end{proposition}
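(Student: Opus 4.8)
The plan is to exhibit, for an equation $f\ v_1\ldots v_k = f'\ v_1'\ldots v_k'$ satisfying \labelcref{form2}, a variable on which induction yields an effective application of the inductive hypothesis in every case that involves a recursive substructure, exactly mirroring the nested-induction argument sketched for the $\tt sapp\ x\ y\ z = \tt sapp\ x\ z\ y$ example. First I would perform induction on the recursive argument $v_k$ of the left-hand CSR $f$. The ${\tt nil}$ case contributes no recursive substructure, so there is nothing to check there. In the ${\tt cons}$ case we have $v_k = {\tt cons}\ h\ t$, and the inductive hypothesis is obtained by substituting $t$ for $v_k$ throughout the equation: $f\ v_1\ldots t = f'\ v_1'\ldots v_k'[v_k := t]$ (the substitution affects the right side only if $v_k$ happens to be among the $v_i'$). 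Reducing the left side by one step of the CSR evaluation rule produces $comb_f(v_1,\ldots,v_{k-1},h,f\ v_1\ldots v_{k-1}\ t)$; since the $v_i$ are pairwise distinct variables, none of them equals $v_k$, so the subterm $f\ v_1\ldots v_{k-1}\ t$ literally matches the LHS of the inductive hypothesis. We rewrite with the hypothesis, replacing this subterm by $f'\ v_1'\ldots v_k'[v_k := t]$.

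Next I would argue that after this rewrite the goal either is already closed or falls into \labelcref{form1}, so that a second induction (on $v_k'$, or on whichever variable \labelcref{form1} prescribes) finishes the job. There are two cases, distinguished by whether $v_k$ occurs among $v_1',\ldots,v_k'$. If $v_k$ does not occur on the right side, then the substitution $[v_k := t]$ is vacuous, the rewritten left side contains $f'\ v_1'\ldots v_k'$ as a subterm, and the right side is exactly $f'\ v_1'\ldots v_k'$; the equation then has the shape $comb_f(\ldots, f'\ v_1'\ldots v_k') = f'\ v_1'\ldots v_k'$, and generalizing the common subprogram $f'\ v_1'\ldots v_k'$ to a fresh variable gives an equation over $comb_f$ alone, which involves no CSR composition on the recursive side and is handled by a further induction exactly as in the warm-up example. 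If $v_k$ does occur on the right, say $v_k = v_j'$ for some $j$, then on the right side we have $f'\ v_1'\ldots v_{j-1}'\ t\ v_{j+1}'\ldots v_k'$; because the $v_i'$ are pairwise distinct, $v_j'$ is the unique occurrence of $t$ and it sits as an argument of $f'$ with all sibling arguments $v_i'$ ($i \neq j$) free of it. The rewritten left side $comb_f(v_1,\ldots,v_{k-1},h,f'\ v_1'\ldots v_{j-1}'\ t\ v_{j+1}'\ldots v_k')$ therefore contains the same term $f'\ v_1'\ldots v_{j-1}'\ t\ v_{j+1}'\ldots v_k'$ as a subprogram, so this is a common non-leaf subprogram on both sides; generalizing it to a fresh variable again collapses the goal to one over $comb_f$, which is induction-friendly in the simple sense. (When $j = k$, i.e.\ $v_k = v_k'$, the right side is a single CSR call on recursive argument $t$ with other arguments free of $t$ — this is literally \labelcref{form1} — and the argument specializes to the nested-induction pattern already illustrated.)

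The main obstacle I anticipate is bookkeeping rather than a genuine mathematical difficulty: one must be careful that the induction hypothesis is stated with the \emph{simultaneous} substitution $v_k := t$ applied to both sides (so that when $v_k = v_j'$ the hypothesis really does match after the one-step reduction of $f$), and one must confirm that \labelcref{form2} is preserved in the subgoal handed to the second induction — this is where distinctness of the argument variables is essential, since it guarantees the reduced right-hand CSR call still has the shape required by \labelcref{form11} and that the term singled out for generalization occurs exactly once on each side. I would also note that the symmetric choice — inducting first on $v_k'$ — works identically, so the proposition is robust to the order of the nested inductions. Formally, the cleanest write-up invokes Proposition~\ref{prop1} (\labelcref{form1} is induction-friendly) as a black box for the inner induction, reducing the whole argument to: (i) one reduction step, (ii) one hypothesis rewrite, and (iii) verifying the resulting goal matches \labelcref{form1}, possibly after generalizing the common subprogram.
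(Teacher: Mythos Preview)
Your overall strategy---induct on $v_k$, apply the hypothesis, then finish via a second induction on $v_k'$---matches the paper's nested-induction proof. But your execution of the middle case analysis has a genuine error. In the case $v_k = v_j'$ with $j<k$, you write that ``on the right side we have $f'\ v_1'\ldots v_{j-1}'\ t\ v_{j+1}'\ \ldots\ v_k'$'' and that this is ``a common non-leaf subprogram on both sides.'' It is not. After the substitution $v_k := {\tt cons}\ h\ t$, the right-hand side of the goal is $f'\ v_1'\ldots v_{j-1}'\ ({\tt cons}\ h\ t)\ v_{j+1}'\ldots v_k'$; the argument at position $j$ is ${\tt cons}\ h\ t$, not $t$, and since $j<k$ is a \emph{non}-recursive position, no reduction of $f'$ will ever shrink it to $t$. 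The term $f'\ \ldots\ t\ \ldots\ v_k'$ arises only on the rewritten left side (via the IH), so after a single induction there is no common subprogram to generalize in this case.

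The paper closes this gap by carrying out the second induction on $v_k'$ explicitly rather than appealing to a common term prematurely: once $v_k' = {\tt cons}\ h'\ t'$ and both occurrences of $f'$ are reduced, applying the second inductive hypothesis makes $f'\ \ldots\ t\ \ldots\ t'$ appear on \emph{both} sides, and that is the term one generalizes. Your alternative of invoking Proposition~\ref{prop1} as a black box is viable, but then the obligation you owe is to check that the subgoal after the first induction satisfies \labelcref{form1} with the right-hand side as the single CSR call on recursive argument $v_k'$ (as the paper does informally in the {\tt sapp} example)---not that a common subprogram already exists after one induction.
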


\Cref{thm:friendly} follows by the two propositions above.

\begin{proof}[of \Cref{prop1}]
    Recall the first induction-friendly form is as follows. $$ f\ v_1\ldots\ v_k = p(v_1,\ldots,v_k)$$
    \begin{enumerate}[label=\textbf{(F1.\arabic*)},ref=(F1.\arabic*),nolistsep,leftmargin=*]
    \item One side of the equation is in the form {\it f $v_1\ldots v_k$}, where $f$ is a CSR and $v_1\ldots v_k$ are different. From the definition of CSR, $f$ applies pattern-matching on $v_k$.
    \item Another side of the equation is a program $p$ with the variables $v_1\ldots v_k$ satisfies the following. For each occurrence of $v_k$, if $v_k$ is passed to the CSR $g$, we need to ensure $v_k$ appears as a recursive argument of $g$.
    \end{enumerate}

    Denote $v_1,\ldots,v_{k-1}$ as $\tilde v$. Induction over $v_k$ and consider the {\tt cons} case. The LHS becomes $comb(\tilde{v},h,f\ \tilde v\ t)$, RHS becomes $p(\tilde v,{\tt cons}\ h\ t)$. Thus, we can apply inductive hypothesis on LHS, and the LHS becomes  $comb(\tilde{v},h,p(\tilde v, t))$. By \labelcref{form12}, either it does not contain $v_k$, leading to exactly the same RHS as the inductive hypothesis, or
    there is a subprogram $f^*\ p_1^*\ldots\ p_{l}^*$ of $p(\tilde v, v_k)$ such that $v_k$ does not appear in $p_1^*, \ldots, p_{l-1}^*$, and $p_l^* = v_k$. In this time, we can use the definition of CSR and obtain $f^*\ p_1^*\ \ldots p_{l-1}^*\ t$, which is a subprogam of $p(\tilde v, t)$, leading to a term for generalization.\qed
\end{proof}

\begin{proof}[of \Cref{prop2}]
    Recall the second induction-friendly form as follows. $$f\ v_1\ \ldots\ v_k \eeq f'\ v_1'\ldots\  v_k'$$
    i.e., each side is a single CSR call whose arguments are distinct variables.

    Without loss of generality, we denote $v_l$ to be $v_k$, $v_r$ to be $v_k'$, and $\tilde v$ to be all other variables. We rephrase the equation above to be
    $$f\ \tilde{v}\ v_r\ v_l \eeq f'\ \tilde{v}\ v_l\ v_r$$
    First induction on $v_l$ and consider the {\tt cons} case, i.e., $v_l = {\tt cons}\ h_l\ t_l$. Apply the inductive hypothesis. We have that.
    $$comb(\tilde v, v_r, h_l, f'\ \tilde v\ t_l\ v_r) = f'\ \tilde v\ ({\tt cons}\ h_l\ t_l)\ v_r$$
    Then induction over $v_r$ and consider the {\tt cons} case, i.e., $v_r = {\tt cons}\ h_r\ t_r$. Apply the inductive hypothesis. We have that.
    \begin{align*}&comb(\tilde v, v_r, h_l, comb'(\tilde v, t_l, h_r, f'\ \tilde v\ t_l\ t_r)) \\& \quad \quad \quad =comb'(\tilde v, {\tt cons}\ h_l\ t_l, h_r, comb(\tilde v, t_r, h_l, f'\ \tilde v\ t_l\ t_r))\end{align*}
    where we can generalize the common term $f'\ \tilde v\ t_l\ t_r$.
\end{proof}

\subsection{Proof of \Cref{thm:progress}}
We first recall the theorem as follows.
\begin{theorem}
\label{thm:prgs2}
Starting from any goal, if all lemmas are successfully synthesized, the initial goal can be eventually transformed into an induction-friendly form.
\end{theorem}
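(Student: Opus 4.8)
The plan is to define a terminating measure on goals and show that every iteration of {\tt Prove} that does not already terminate (via the deductive solver or via an induction-friendly split) strictly decreases this measure by applying one of the two tactics, while never increasing it on the newly generated lemma subgoal. I would first make precise the space of goals reachable by {\tt Prove}: each is an equation $\forall \bar{x}.\ p_1(\bar x) = p_2(\bar x)$ with $p_1,p_2 \in {\sf PROG}$, together with a premise set. Since soundness and correctness of the lemmas are assumed ("if all lemmas are successfully synthesized"), the premise set plays no role in the termination argument, so I would reason purely about the equation. The key structural fact, established in Section \ref{sec:routine}, is that each tactic replaces a non-leaf subprogram (Tactic 1) or a maximal-depth occurrence of the recursive argument (Tactic 2) with a fresh CSR call whose arguments are distinct variables — i.e., with a term that is strictly "flatter" than what it replaced.

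The main steps, in order, are: (1) Define the measure. A natural candidate is a lexicographic pair: the first component counts the number of CSR symbols $f$ appearing in the equation that are applied to a non-variable argument or whose recursive-argument slot is shared with another argument (roughly, the number of "compositions" obstructing \labelcref{form11}); the second component measures the violation of \labelcref{form12}, e.g., the maximal AST depth at which the recursive argument $v_k$ occurs as a non-recursive sub-argument of a CSR call on the opposite side. (2) Show that if the equation is not induction-friendly, then at least one tactic's precondition fires: by definition Tactic 1 fires whenever \labelcref{form11} fails, and Tactic 2 fires whenever \labelcref{form11} holds but \labelcref{form12} fails; if both \labelcref{form11} and \labelcref{form12} hold the equation is in \labelcref{form1} and we would not be in the tactic branch. (3) Show Tactic 1 strictly decreases the first component: the extracted minimal-cost subprogram $c\,p_1\ldots p_k$ with some $p_i$ a non-variable is replaced, via the synthesized lemma $f^*\,\tilde v\,v = p_s'(\tilde v, v)$, by $f^*$ applied to distinct variables, removing at least one offending composition and introducing none (the new CSR has only variable arguments). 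The lemma subgoal itself is $\forall \tilde v.\forall v.\ f^*\,\tilde v\,v = p_s'(\tilde v, v)$, which by construction already satisfies \labelcref{form11} on its LHS, so it is in \labelcref{form1} (using that $p_s'$ is an abstraction built only from the chosen variable $v$ and fresh ones, so \labelcref{form12} holds too), hence has measure $0$ and terminates immediately. (4) Show Tactic 2 strictly decreases the second component while not increasing the first: it rewrites $f\,v_1\ldots v_k$ with a shared/misplaced recursive argument into $f'$ with its recursive slot now occupied by the intended variable, strictly lowering the maximal bad-occurrence depth; and the generated lemma is of the form $\forall.\ f\,\bar a = f'\,\bar a'$ with all arguments distinct variables, i.e., it already satisfies \labelcref{form2}, so again measure $0$. (5) Conclude by well-founded induction on the lexicographic measure that {\tt Prove}'s recursion tree is finite and every leaf equation is either discharged deductively or is in an induction-friendly form.

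The hard part, and the step I would spend the most care on, is step (3)–(4): proving that the replacement performed by each tactic genuinely decreases the chosen measure \emph{and} that the synthesized lemma subgoal does not reintroduce the same obstruction at the same or higher level — i.e., that the measure is well-defined and monotone under the specific "abstract a minimal-cost subprogram, resynthesize as a flat CSR call" operation. This requires carefully matching the informal notion of "composition of CSRs" used in Section \ref{sec:tactic1} with a formal AST-based quantity, handling the corner cases where the minimal-cost extraction could in principle pick a subprogram that, after rewriting, still contains nested CSR calls inside the abstracted-away arguments (these are discharged recursively as part of proving the lemma, but one must argue they correspond to strictly smaller goals). I would also need to confirm that \labelcref{form2} is itself stable — i.e., that once Tactic 2 has been applied the resulting equation does not fall back out of the \labelcref{form11}-satisfying region — which follows from the analysis in Section \ref{sec:tactic2} but should be stated explicitly. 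Everything else (the two tactics' preconditions being exhaustive, the lemma subgoals being in an induction-friendly form by construction, well-foundedness of a lexicographic order on $\Nset^2$) is routine.
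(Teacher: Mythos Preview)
Your overall architecture---a two-component lexicographic measure, with Tactic~1 reducing the first component and Tactic~2 the second, and each synthesized lemma already being (close to) induction-friendly---is exactly the paper's approach. The paper packages it slightly differently: it first isolates a lemma (their Lemma~\ref{lem31}) showing that any goal satisfying \labelcref{form11} terminates under repeated Tactic~2 applications, then uses a single ``number of compositions'' measure $\psi$ to show Tactic~1 drives an arbitrary goal into the \labelcref{form11} region.

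There are two places where your details do not quite go through. First, in step~(3) you assert that the Tactic~1 lemma $f^*\,\tilde v\,v = p_s'(\tilde v, v)$ is in \labelcref{form1} because ``$p_s'$ is an abstraction built only from the chosen variable $v$ and fresh ones, so \labelcref{form12} holds too.'' This is not guaranteed: $p_s'$ has the shape $c\,p_1'\ldots p_k'$ with each $p_i'$ an operator applied to variables, and the chosen $v$ may well sit in a non-recursive slot of some inner CSR (the paper only claims the lemma satisfies \labelcref{form11}, and explicitly says the choice of $v$ maximizes but does not force the other conditions). So the lemma subgoal is not in general measure-$0$; you must feed it back through the Tactic~2 phase, exactly as the paper's Lemma~\ref{lem31} does. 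Second, your suggested second component---the maximal AST depth of a bad occurrence of $v_k$---does not strictly decrease under one Tactic~2 step: Tactic~2 fixes the single deepest bad occurrence, but another bad occurrence may sit at the same depth. The paper instead uses $\phi(e)=\text{(number of bad occurrences of }v_k\text{)}$, which drops by exactly one per application. With these two fixes your argument becomes literally the paper's proof; the remaining concerns you flag in your final paragraph (stability under rewriting, exhaustiveness of the preconditions) are handled as you anticipate.
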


We first prove the following lemma.
\begin{lemma}\label{lem31}
    \Cref{thm:prgs2} holds if the input goal satisfies \labelcref{form11}.
\end{lemma}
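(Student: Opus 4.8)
The plan is to show that whenever the input goal $\peq$ already satisfies \labelcref{form11} — that is, one side of the equation is a bare CSR call $f\ v_1\ldots v_k$ on distinct variables — then only Tactic~2 can ever be invoked on $\peq$ (since Tactic~1's precondition is the negation of \labelcref{form11}), and a bounded number of Tactic~2 applications drives $\peq$ into \labelcref{form12}, hence into \labelcref{form1}, which is induction-friendly by \Cref{prop1}. So the argument has two parts: (i) one step of Tactic~2 strictly decreases a syntactic measure on the non-CSR side $p(v_1,\ldots,v_k)$ while preserving \labelcref{form11}, and (ii) when the measure bottoms out, the resulting goal satisfies \labelcref{form12}.

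First I would fix notation: write the goal as $f\ v_1\ldots v_k = p(\bar v)$, with $v_k$ the recursive argument. Since \labelcref{form12} fails, there is an occurrence of $v_k$ inside $p$ that is passed to some CSR $g$ either in a non-recursive-argument position, or in the recursive position but with some sibling argument also mentioning $v_k$. By the description of Tactic~2's extract routine, the tactic selects the occurrence of $v_k$ of maximum AST depth where $v_k$ is passed to a CSR $g\ p_1\ldots p_k$; maximality forces each $p_i$ to be either $v_k$ itself or a subprogram not containing $v_k$. The synthesized lemma $\forall \bar a.\ g\ a_1\ldots a_k = g'\ \pi(a_1,\ldots,a_k)$ permutes the arguments so that (a fixed occurrence of) $v_k$ lands in the recursive-argument slot of $g'$. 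Rewriting $p$ with this lemma replaces the bad subterm $g\ p_1\ldots p_k$ by $g'$ with $v_k$ now in the recursive position, and — by the abstraction machinery of \Cref{sec:routine} — does not touch the $f\ v_1\ldots v_k$ side, so \labelcref{form11} is preserved. The key quantity to track is the number of occurrences of $v_k$ in $p$ that violate the \labelcref{form12} condition (appear non-recursively, or with a $v_k$-tainted sibling); I would argue this count strictly decreases, because the rewritten occurrence now sits correctly and no new violating occurrences of $v_k$ are created (the permutation only moves existing arguments; any sibling that contained $v_k$ was, by maximal-depth selection, impossible, so the rewritten call's siblings are $v_k$-free). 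When the count reaches zero, every occurrence of $v_k$ in $p$ meets \labelcref{form12}, so $\peq$ is in \labelcref{form1} and we are done by \Cref{prop1}.

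The main obstacle I expect is the bookkeeping in step (i): making precise the claim that rewriting with the switching lemma does not introduce fresh \labelcref{form12}-violating occurrences of $v_k$, and that the measure is genuinely well-founded. The subtlety is that $v_k$ can occur multiple times inside the abstracted arguments $p_1,\ldots,p_k$ of $g$, and one must be careful that Tactic~2 abstracts \emph{all} arguments to fresh variables (as the paper notes, guaranteeing the lemma is in \labelcref{form2}) so that the rewrite is a clean local substitution whose effect on the $v_k$-occurrence structure is exactly "one violating occurrence becomes compliant." I would also need to confirm the side condition that the maximal-depth occurrence always exists and that some sibling-or-self equals $v_k$ — which the paper asserts "always exists since the equation violates~\labelcref{form12}" — and fold that into the base case of the induction on the measure. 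Everything else (preservation of \labelcref{form11}, applicability of \Cref{prop1} at termination) is immediate from the cited results.
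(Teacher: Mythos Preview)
Your proposal is correct and takes essentially the same approach as the paper: show that only Tactic~2 can fire when \labelcref{form11} already holds, define a syntactic measure on the occurrences of $v_k$ in the non-single-call side, argue that one Tactic~2 application strictly decreases it while the emitted lemma lands in \labelcref{form2}, and conclude by well-foundedness that the goal reaches \labelcref{form1}. The only difference is that the paper uses a simpler potential than yours: it lets $\phi(e)$ be the number of occurrences of $v_k$ that are \emph{not} in the recursive-argument slot of a CSR, without your additional sibling clause; when $\phi=0$ the max-depth occurrence automatically satisfies \labelcref{form12} (a sibling equal to $v_k$ would itself be a non-recursive occurrence, and a sibling strictly containing $v_k$ would be deeper), so the extra bookkeeping you anticipate is not needed.
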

\begin{proof}
    Note that we only apply the second tactic~(\Cref{sec:tactic2}) if the input goal satisfies \labelcref{form11}. Let $\phi(e)$ to be the number of occurrences of $v_k$ in $e$ where $v_k$ does not appear as a recursive argument, and let $e_in$ be the input equation. If $\phi(e_in)=0$ then the input equation is induction-friendly, and thus we are done. Otherwise, applying the second tactic, we get two subgoals: (1) the lemma itself, which satisfies \labelcref{form2} and thus is induction-friendly; (2) the equation after applying the lemma, this is an equation $e'$ with $\phi(e') = \phi(e_in) - 1$. Thus, applying a finite number of tactic application, it eventually gets $\phi = 0$ and becomes induction-friendly.
\end{proof}

We are ready to prove \Cref{thm:prgs2}.
\begin{proof}[of \Cref{thm:prgs2}]
If the input-goal satisfies \labelcref{form11}, then we are done due to \Cref{lem31}. Otherwise, let $\psi(e)$ to be the number of compositions in the equation $e$ and let $e_in$ be the input equation. We have that $\psi(e_in)>0$. Applying the first tactic, we get two subgoals: (1) the lemma itself, which satisfies~\labelcref{form11} and thus can be  eventually reduced to induction-friendly forms; (2) the equation after applying the lemma, this is an equation $e'$ with $\psi(e') < \psi(e_in)$. Thus, applying a finite number of tactic application, it eventually gets $\phi = 0$ and becomes induction-friendly.
\end{proof}

\end{document}